\documentclass[12pt,a4paper]{article}

\usepackage[a4paper,margin=1in]{geometry}
\usepackage{amsmath,amssymb,bm}
\usepackage{amsthm}
\usepackage{booktabs}
\usepackage{graphicx}
\usepackage{float}
\usepackage{placeins}
\usepackage{enumitem}
\usepackage[numbers,square,sort&compress]{natbib}
\usepackage[nopatch=footnote]{microtype}
\usepackage{xcolor}
\usepackage{setspace}
\usepackage{hyperref}
\newcommand{\R}{\mathbb R}

\newcommand{\one}{\bm 1}

\newcommand{\argmin}{\operatorname*{arg\,min}}
\newtheorem{proposition}{Proposition}

\theoremstyle{definition}

\theoremstyle{remark}
\newtheorem{remark}{Remark}
\theoremstyle{plain}

\newcommand{\cD}{\mathcal D}
\newcommand{\cH}{\mathcal H}
\newcommand{\cF}{\mathcal F}
\newcommand{\ve}{\operatorname{vec}}
\hypersetup{
colorlinks=true,
linkcolor=blue!60!black,
urlcolor=blue!60!black,
citecolor=blue!60!black
}

\title{\Large\bfseries
Neighbourhood-Based Generalized Dynamic Principal Components for Spatial Functional Data}
\author{}
\date{}

\begin{document}

\maketitle

\vspace{-1.5em}

\begin{center}

\begin{minipage}{0.95\textwidth}
\centering

{\large
Dharini Pathmanathan\textsuperscript{a,b,*},
Teck Xiang Seow\textsuperscript{a},
Philipp Otto\textsuperscript{c},
and Sophie Dabo-Niang\textsuperscript{d}
}

\vspace{1.2em}

\small
\raggedright

\textsuperscript{a}Institute of Mathematical Sciences, Faculty of Science,
Universiti Malaya, 50603 Kuala Lumpur, Malaysia

\vspace{0.35em}

\textsuperscript{b}Centre of Research for Statistical Modelling and Methodology,
Faculty of Science, Universiti Malaya, 50603 Kuala Lumpur, Malaysia

\vspace{0.35em}

\textsuperscript{c}School of Mathematics and Statistics, University of Glasgow,
University Place, Glasgow, G12 8QQ, UK

\vspace{0.35em}

\textsuperscript{d}Univ. Lille, CNRS, UMR 8524 -- Laboratoire Paul Painlev\'e,
Inria-Datavers, F-59000 Lille, France

\vspace{0.6em}

\textsuperscript{*}\textit{Corresponding author:}
\href{mailto:dharini@um.edu.my}{dharini@um.edu.my}

\end{minipage}

\end{center}

\vspace{1.2em}

\singlespacing

\begin{abstract}
Regular-grid spatial functional datasets arise naturally in gridded environmental, oceanographic, climate, and remote-sensing applications, where each spatial location is associated with an entire curve. Existing spectral spatial functional principal component analysis methods provide an important frequency-domain approach for such data, but they do not directly target finite-neighbourhood least-squares reconstruction from an estimated latent spatial component field. To address this, we propose Spatial Functional Generalized Dynamic Principal Components (SFGDPC), a reconstruction-based dimension-reduction method for regular-grid spatial functional data. Each function is first represented by basis coefficients, and each coefficient vector is reconstructed from a scalar latent spatial field and its Chebyshev neighbourhood on the rectangular grid. The spatial neighbourhood radius is selected using a Bayesian information criterion (BIC)-type conditional reconstruction criterion. In the local-neighbourhood transfer simulation, SFGDPC reduced mean cumulative normalized mean squared error (NMSE) relative to spatial functional principal component analysis (SFPCA) by approximately \(38\)--\(53\%\) across the reported component counts and covariance conditions. In the Indian Ocean sea surface temperature (SST) application, one SFGDPC component produced lower whole-grid reconstruction error than both the boundary-safe and high-cap SFPCA benchmarks across all 33 annual fields. The results support SFGDPC as a local reconstruction-based complement to spectral SFPCA for regular-grid spatial functional data.
\end{abstract}

\noindent\textbf{Keywords:}
functional data analysis; spatial statistics; dimensionality reduction; nonstationarity; sea surface temperature; environmental statistics.

\section{Introduction}

Many modern environmental datasets consist not of a single measurement at each location, but of an entire temporal or spectral profile. Sea surface temperature (SST) over the course of a year \citep{reynolds_etal_2007,kuenzer_sfpca}, rainfall intensity across seasons \citep{delicado2010spatial,mateu2021geostatistical}, and remotely sensed vegetation trajectories \citep{liu2012remote} are typical examples. The resulting data combine two sources of structure: variation within each curve and dependence across spatial locations. Effective dimension reduction should therefore account for both. Such data are commonly described as a spatial functional process
\(\{X_s:s\in D\subset\mathbb{R}^r\}\), 
where $s$ denotes a spatial location, $r$ is the spatial dimension, and $X_s \in L^2(\mathcal T)$ is a square-integrable functional random variable, typically written as $X_s(t)$ for $t\in\mathcal{T}$ \citep{delicado2010spatial,mateu2021geostatistical,kuenzer_sfpca}. The present work focuses on the two-dimensional regular-grid case (\(r=2\)), with observations on a complete finite rectangular lattice, which we index as
\(D_N=\{1,\ldots,n_{1}\}\times\{1,\ldots,n_{2}\}, \; N=n_{1}\times n_{2}\), where \(n_{1}\) and \(n_{2}\) denote the numbers of grid rows and columns, respectively.

Regular-grid environmental and reanalysis products are a motivating data class for this work, as they often contain repeated functional profiles over rectangular spatial lattices. The SST application in Section~\ref{sec:sst} provides a representative empirical assessment of local reconstruction for this type of data. Several approaches have been developed for spatial functional data. Spectral/filter-based spatial functional principal component analysis (SFPCA) methods incorporate spatial dependence through frequency-domain structure under weak stationarity and have a well-developed asymptotic theory \citep{kuenzer_sfpca}. This spectral framework has also been extended to multivariate functional responses on regular grids \citep{siahmed2025smfpca}; the present paper concerns a univariate functional response per location, for which SFPCA \citep{kuenzer_sfpca} is the natural spectral benchmark.

A different line of work is model-based and geostatistical. Hidden dynamic geostatistical models (HDGMs) place a latent dynamic process on spatially indexed data and estimate the resulting model by likelihood- or expectation--maximization (EM)-type methods \citep{calculli_etal_2015}. Functional hidden dynamic geostatistical models (f-HDGMs) extend this idea to functional responses through basis-coefficient representations and are particularly suited to kriging, covariate adjustment, and irregularly located observations \citep{wang_finazzi_fasso_2021,maranzano_otto_fasso_2023}. Functional geostatistical prediction for Hilbert-space-valued data has also been studied in~\citep{MenafoglioSecchiDallaRosa2013}. These methods are model-based and inferential in nature, whereas the method developed here is descriptive and reconstruction-oriented.

For large regular-grid functional datasets, an important practical goal is to summarize the observed field using a small number of components while retaining sufficient information for reconstruction. The present work is motivated by a reconstruction question complementary to the spectral SFPCA formulation \citep{kuenzer_sfpca}: can the functional field be represented using a latent component field over a bounded grid neighbourhood, with the size of that neighbourhood chosen as part of the representation? In SFPCA, \(L\) is an integer-valued truncation parameter that bounds the spatial lags included in the estimated filters, whereas the neighbourhood radius considered here defines the spatial support of the reconstruction model. Generalized dynamic principal components (GDPC) \citep{pena_yohai_gdpc}, implemented in the \texttt{gdpc} R package \citep{pena_smucler_yohai_2020},
form a natural reconstruction-based starting point, but the method is
formulated for vector time series with a one-dimensional lag structure. These considerations motivate extending its reconstruction principle to functional observations on a two-dimensional regular grid.

The central contribution of this paper is Spatial Functional Generalized Dynamic Principal Components (SFGDPC). For each candidate radius, SFGDPC alternates between estimation of the latent spatial field and the functional loadings. The radius determines the maximum spatial offset included in the reconstruction, and its component-specific value is selected from \(k=0,\ldots,s_{\max}\) using a Bayesian information criterion (BIC)-type penalized reconstruction criterion. In this way, SFGDPC retains the least-squares reconstruction principle of GDPC while adapting its component and loading structure to a regular spatial lattice. The resulting framework links functional dimension reduction with finite-neighbourhood spatial reconstruction.

We evaluate the proposed method in three complementary settings: two simulation studies and one environmental application. The first is a stationary spatial functional autoregressive moving-average (SFARMA) benchmark favourable to spectral methods, used to examine whether SFGDPC remains competitive under a standard stationary spatial-functional process. The second is a local-neighbourhood latent-driver simulation, representing the regime targeted by the proposed method. The third is an application to Indian Ocean SST anomaly fields, used to assess empirical reconstruction performance and the spatial radii selected by SFGDPC.

The remainder of the paper is structured as follows. Section~\ref{sec:methodology} develops the model, estimation algorithm, and
BIC-type radius-selection criterion. Section~\ref{sec:simulation} reports the simulation studies, and Section~\ref{sec:sst} presents the SST application. Sections~\ref{sec:discussion} and~\ref{sec:conclusion} contain the discussion and conclusion, respectively.

\section{Methodology}
\label{sec:methodology}

Let $\{X_s:s\in\mathbb Z^2\}$ be a functional spatial process with values in
$L^2(\mathcal T)$. The observations are located on the complete rectangular
lattice $\mathcal D=\{1,\ldots,n_1\}\times\{1,\ldots,n_2\}$, with
$N=n_1n_2$. All observations are approximated in a common, fixed, $m$-dimensional basis
$\{\phi_1,\ldots,\phi_m\}$:
\[
  X_s^{(m)}(t)=\sum_{j=1}^m z_{sj}\phi_j(t),
  \qquad z_s=(z_{s1},\ldots,z_{sm})^\top\in\R^m.
\]
The basis may consist of functional principal components, splines, or Fourier functions. Let \(Z=(z_s^\top:s\in\cD)\in\R^{N\times m}\) denote the coefficient matrix after any analysis-specific centring. Its rows follow the ordering \(i(r,c)=c+(r-1)n_2\), for \(1\leq i(r,c)\leq N\).

\subsection{Spatial neighbourhood}

For an integer radius $k\geq0$, define the full Chebyshev offset set
$\cH_k=\{-k,\ldots,k\}^2$, with
$q_k=|\cH_k|=(2k+1)^2$ and $c_k=q_k-1$.
For each site $i$, the noncentral offsets in $\cH_k$ are scanned in row-major order, with the row offset varying slowest. Offsets whose translated sites lie inside $\cD$ are retained and compacted into the leading slots. If
fewer than $c_k$ valid neighbours exist, the remaining trailing slots are filled with the target index itself.
Writing $\nu_{k,0}(i)=i$ and letting
$\nu_{k,1}(i),\ldots,\nu_{k,c_k}(i)$ denote the resulting compacted and
self-padded indices, define the slot-selection operators
$\Pi_{k,r}\in\R^{N\times N}$ by
\begin{equation}
  (\Pi_{k,r}f)_i=f_{\nu_{k,r}(i)},
  \qquad r=0,\ldots,c_k.
  \label{eq:slot-selection}
\end{equation}
No index is mapped across a lattice row boundary, and the number of design columns is the same at every site.
The spatial design matrix is then
\begin{equation}
  F_k(f)
  =
  \bigl[
    \Pi_{k,0}f,\ldots,\Pi_{k,c_k}f,\one_N
  \bigr]
  \in\R^{N\times(q_k+1)} \, ,
  \label{eq:spatial-design}
\end{equation}
so that the intercept occupies the final column.

\subsection{Spatial functional generalized dynamic component}

For a fixed radius $k$, one component comprises a spatial score field $f=(f_i:i\in\cD)^\top\in\R^N$, an intercept $\alpha\in\R^m$, and loading
vectors $\beta_{k,r}\in\R^m$, $r=0,\ldots,c_k$. The coefficient vector at
site $i$ is reconstructed as
\begin{equation}
  \widehat z_i
  =
  \alpha+
  \sum_{r=0}^{c_k}
  \beta_{k,r}f_{\nu_{k,r}(i)}.
\label{ref3}
\end{equation}

Let
\[
  B_k=
  [\beta_{k,0},\ldots,\beta_{k,c_k}]
  \in\R^{m\times q_k},
  \qquad
  C_k=
  \begin{bmatrix}
    B_k^\top\\
    \alpha^\top
  \end{bmatrix}
  \in\R^{(q_k+1)\times m}.
\]
The fitted coefficient matrix is
\begin{equation}
  \widehat Z_k=F_k(f)C_k.
\label{ref4}
\end{equation}
Equation~(\ref{ref3}) is a reconstruction parameterization and need not be
assumed to be the true data-generating model.

At interior sites, slot $0$ represents the centre, while each slot
$r=1,\ldots,c_k$ corresponds to the $r$-th noncentral offset in the fixed
row-major ordering. Under this ordering, $\beta_{k,r}$ has its usual
lag-specific interpretation at interior sites.

At boundary sites, invalid offsets are removed and the surviving neighbours are compacted. The slot indices do not retain a universal fixed-offset interpretation. The trailing self-padded slots contribute an additional effective centre loading $\sum_{r\in\mathcal P_i}\beta_{k,r}$, where
$\mathcal P_i$ is the set of padded slots at site $i$. Lag-specific
interpretation of the fitted loadings is exact only at interior
sites. The same boundary convention is used in estimation, reconstruction, and model selection. The fitted object stores the intercept separately and the loading columns in reverse design order; the latter are reordered before reconstruction or interpretation.

For a nondegenerate score field, the implementation removes translation and
scale indeterminacy by imposing
\begin{equation}
  \one_N^\top f=0,
  \qquad
  \|f\|_2^2=N-1,
  \label{ref5}
\end{equation}
so that $f$ has zero mean and unit sample standard deviation. Its sign remains
arbitrary. The handling of a degenerate score update is described below.

For fixed $k$, SFGDPC seeks to minimize the reconstruction loss
\begin{equation}
  \mathcal L_k(f,C)
  =
  \frac{1}{Nm}
  \left\|Z-F_k(f)C\right\|_F^2
  \label{ref6}
\end{equation}
over score fields satisfying~(\ref{ref5}) and
$C\in\R^{(q_k+1)\times m}$.
We use \((\widehat f_k,\widehat C_k)\) to denote the output of the alternating least-squares algorithm described below; global minimization
of~(\ref{ref6}) is not guaranteed.

Only the fitted matrix and the attained loss are essential for reconstruction.
For a fixed score field, the least-squares coefficient matrix \(C\) is unique if and only if \(F_k(f)\) has full column rank. Here
$\|Z-F_k(f)C\|_F$ denotes the Frobenius norm.

The methodology then minimizes an ordinary Frobenius loss in coefficient space. This loss equals integrated squared functional error when the basis is orthonormal. For a basis with Gram matrix \(G_{ab}=\int_{\mathcal T}\phi_a(t)\phi_b(t)\,dt\), exact \(L^2\) fitting requires either transformed coefficients \(z_s^\star=G^{1/2}z_s\) or a Gram-weighted loss. In practice we use the unweighted coefficient-space loss.

\subsection{Finite-sample existence and the conditional loading update}

The finite-sample SFGDPC methodology does not require second-order
stationarity or isotropy. The adoption of a common loading vector for each neighbourhood slot across sites constitutes a structural constraint of the chosen reconstruction parameterization rather than an assumption of stationarity. Any stationarity assumptions used below pertain to specific data-generating mechanisms or competing methods and are not requirements of the SFGDPC estimator.

The following finite-sample conditions provide a sufficient theoretical setting for the existence of a minimizer of the restricted fixed-radius problem over $\mathcal F_k$ and for the uniqueness of the conditional loading update. Computational handling outside the full-rank setting is described subsequently.

\noindent\textbf{(A1) Common representation and finite data.}
The basis functions are linearly independent, $m<\infty$, and every entry of $Z$ is finite. The data contain no missing values. If normalized error is reported, $\|Z\|_F>0$.

\noindent\textbf{(A2) Valid lattice and candidate radii.}
The observed sites form the complete rectangle $\cD$ with known dimensions
$n_1$ and $n_2$, where $N=n_1n_2\geq2$. The vectorization is as specified
above, and $\mathcal K=\{0,\ldots,s_{\max}\}$ for a fixed finite integer
$s_{\max}\geq0$. For each $k\in\mathcal K$, $q_k+1\leq N$.

\noindent\textbf{(A3) Stable score designs.}
For each \(k\in\mathcal K\), let \(\cF_k\) be a nonempty compact set of score fields satisfying~(\ref{ref5}). For the finite-sample existence result,
assume that there exists \(\eta_k>0\) such that
\begin{equation}
\lambda_{\min}\{F_k(f)^\top F_k(f)\}\geq\eta_k
\qquad\text{for every }f\in\cF_k.
\label{ref7}
\end{equation}
This sufficient condition excludes rank-degenerate score configurations and
ensures that the least-squares coefficient matrix is uniquely defined and continuous in $f$.
  \begin{proposition}
  Under (A1)--(A3), for fixed $f\in\cF_k$, the
unique least-squares coefficient matrix is
\begin{equation}
  \widehat C_k(f)
  =\{F_k(f)^\top F_k(f)\}^{-1}F_k(f)^\top Z.
  \label{ref8}
\end{equation}

Moreover, the profiled loss $f\mapsto\mathcal L_k\{f,\widehat C_k(f)\}$ is continuous on $\cF_k$ and attains its minimum. Hence the restricted problem
\[
\min_{\substack{f\in\cF_k\\
C\in\R^{(q_k+1)\times m}}}
\mathcal L_k(f,C)
\]
has at least one solution.
  
  \end{proposition}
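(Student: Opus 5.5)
For fixed $f\in\cF_k$, the loss $\mathcal L_k(f,\cdot)$ is a convex quadratic in $C$. I will treat it as $m$ decoupled ordinary least-squares problems, one for each column of $Z$, all sharing the design $F_k(f)$. Write $F=F_k(f)$ and expand
\[
\|Z-FC\|_F^2=\operatorname{tr}(Z^\top Z)-2\operatorname{tr}(C^\top F^\top Z)+\operatorname{tr}(C^\top F^\top F C).
\]
Setting the gradient to zero gives the normal equations $F^\top F C=F^\top Z$. By (A3), $F^\top F$ is positive definite, since $\lambda_{\min}\geq\eta_k>0$. So the Hessian $2(F^\top F)\otimes I_m$ is positive definite, the objective is strictly convex, and the unique minimizer is $\widehat C_k(f)=(F^\top F)^{-1}F^\top Z$. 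This is (\ref{ref8}). Finiteness of the entries of $Z$ from (A1) makes all the quantities well defined. Condition (A2), $q_k+1\le N$, is consistent with full column rank but is not itself needed once (A3) holds.

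For continuity of the profiled loss, I first note that $f\mapsto F_k(f)$ is linear, hence continuous. Each column $\Pi_{k,r}f$ is a fixed linear map of $f$, because the slot indices $\nu_{k,r}(i)$ depend only on the lattice and not on $f$, and the last column is the constant $\one_N$. It follows that $f\mapsto F^\top F$ and $f\mapsto F^\top Z$ are continuous, in fact polynomial.

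Matrix inversion is continuous on the open set of invertible matrices, and by (A3) $F^\top F$ stays in the set where $\lambda_{\min}\ge\eta_k$. So $f\mapsto \widehat C_k(f)$ is continuous on $\cF_k$. Alternatively, $\|(F^\top F)^{-1}\|_2\le\eta_k^{-1}$ together with the resolvent identity gives local Lipschitz continuity directly. Composing with the continuous map $(f,C)\mapsto \|Z-F_k(f)C\|_F^2/(Nm)$ shows that the profiled loss $P(f)=\mathcal L_k\{f,\widehat C_k(f)\}$ is continuous. Equivalently, $P(f)=(Nm)^{-1}\operatorname{tr}\{Z^\top(I-H_f)Z\}$ with the hat matrix $H_f=F(F^\top F)^{-1}F^\top$ continuous in $f$.

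Since $\cF_k$ is nonempty and compact by (A3), the Weierstrass extreme value theorem yields some $f^\ast\in\cF_k$ minimizing $P$. Finally, let $C^\ast=\widehat C_k(f^\ast)$. For any $f\in\cF_k$ and any $C$,
\[
\mathcal L_k(f,C)\ge P(f)\ge P(f^\ast)=\mathcal L_k(f^\ast,C^\ast).
\]
Hence $(f^\ast,C^\ast)$ solves the restricted joint problem.

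The only delicate point is the continuity of the inverse. This is exactly where the uniform eigenvalue bound in (A3) is used. Without it, $F^\top F$ could become singular on or near $\cF_k$, and $\widehat C_k$ would then be undefined or could blow up. Everything else is routine least squares and compactness.
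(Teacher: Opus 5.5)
Your proof is correct and follows the same route as the paper's: (A3) gives continuity of $f\mapsto\widehat C_k(f)$, hence of the profiled loss, and compactness of $\cF_k$ with the extreme value theorem yields a minimizer. You also spell out two steps the paper leaves implicit: the normal-equation argument for uniqueness of $\widehat C_k(f)$, and the inequality $\mathcal L_k(f,C)\ge P(f)\ge P(f^\ast)$ that turns the profiled minimizer into a solution of the joint problem.
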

\textbf{Proof}:
\textit{Condition~(\ref{ref7}) makes $f\mapsto\widehat C_k(f)$ continuous. Consequently,
$f\mapsto\mathcal L_k\{f,\widehat C_k(f)\}$ is continuous on the compact set
$\cF_k$ and attains its minimum. Therefore, the restricted problem over $\cF_k$ has at least one solution. The sign of any minimizing score field may be fixed by any deterministic convention if a unique orientation is desired.}

\begin{remark}
Assumption~(A3) is an idealized sufficient condition for the existence and continuity result and is not verified by the implementation. For a fixed score, the implementation attempts a symmetric positive-definite solution of the normal equations and uses a Moore--Penrose inverse if that solve fails or returns nonfinite values. Outside the full-rank setting of the Proposition,
the implemented loading update is represented by $\widehat C_k(f)=F_k(f)^+Z$,
which coincides with~(\ref{ref8}) when $F_k(f)$ has full column rank.
This distinguishes full-rank uniqueness from computational well-definedness under the pseudoinverse fallback. The condition
$q_k+1\leq N$ is retained because it is necessary for possible full-column-rank
uniqueness, although it is not required merely to compute a pseudoinverse solution.
\end{remark}

\begin{remark}
If the centred raw score has nonfinite or near-zero sample standard deviation, the implementation returns the zero field rather than reinitializing it. The convergence flag indicates only that the algorithm terminated before the iteration limit and does not certify that the score is nondegenerate. The
selection criterion continues to use the nominal parameter count \(p_k\) for
such a candidate.
\end{remark}

\subsection{Alternating least-squares computation}

For a current score $f$, the intercept and loading matrix are updated using~(\ref{ref8}). For fixed $\alpha$ and $B_k$, define
\[
  A_k(B_k)
  =
  \sum_{r=0}^{c_k}
  \left(\beta_{k,r}\otimes\Pi_{k,r}\right)
  \in\R^{Nm\times N}.
\]
The unconstrained least-squares score update is characterized by
\begin{equation}
  f^{\mathrm{raw}}
  =
  A_k(B_k)^+
  \ve\!\left(Z-\one_N\alpha^\top\right),
  \label{ref9}
\end{equation}
where $A^+$ denotes the Moore--Penrose inverse. The implementation does not
form $A_k(B_k)^+$ explicitly. Instead, it accumulates the equivalent dense
$N\times N$ normal equations using the centre and neighbour slots
$\nu_{k,r}(i)$ and solves them directly. If the direct solve fails or returns
nonfinite values, the Moore--Penrose inverse of the normal-equation matrix is
used.

When the centred raw score has finite sample standard deviation exceeding \(10^{-12}\), it is centred and rescaled according to
\begin{equation}
  f\leftarrow \sqrt{N-1}\,
  \frac{M_Nf^{\mathrm{raw}}}
       {\|M_Nf^{\mathrm{raw}}\|_2},
  \qquad
  M_N=I_N-N^{-1}\one_N\one_N^\top.
  \label{ref10}
\end{equation}

If the sample standard deviation is nonfinite or no greater than \(10^{-12}\), the implementation instead returns the zero field.

Unless an initial score field is supplied, the current working coefficient matrix is centred across grid locations and its first right singular vector is computed. The corresponding score vector is then formed and standardized
according to~(\ref{ref5}). At component step $\ell$, the working matrix is $E^{(\ell-1)}$, with $E^{(0)}=Z$. This initialization is independent of $k$, so all candidate radii within the same component step use the same initial score field.

Since the intercept is unrestricted and the loading matrix is unconstrained in scale, this normalization does not change the minimum attainable reconstruction loss: its effect can be absorbed by adjusting the intercept
and rescaling the loadings. Since the loading and intercept parameters are refitted after the normalized score update, the loss evaluated after each complete pair of exact updates is non-increasing.

\begin{remark}
The updates stop when the relative change between successive mean squared error (MSE) values is no
larger than the prescribed numerical tolerance, or when the iteration limit is reached. This relative change is the absolute change in MSE divided by the absolute previous MSE plus machine epsilon, which prevents division by zero.
\end{remark}

\begin{remark}
All candidate radii use the same working matrix, boundary rule, initialization policy, numerical tolerance, and iteration limit. A warning is issued if one or more candidate fits fail to converge within the iteration
limit; the implemented selector nevertheless minimizes the criterion over all candidate fits. If the minimum is attained by more than one radius, the smallest is selected. Since the objective in~(\ref{ref6}) is non-convex
jointly in $(f,C)$, the algorithm may reach a local stationary solution. The analyses use one initialization per candidate, and no multiple-start procedure is applied.
\end{remark}
  
\subsection{Selection of the spatial radius}

The alternating least-squares procedure is applied to every candidate radius \(k\in\mathcal K=\{0,\ldots,s_{\max}\}\). Let \(\widehat Z_k\) denote the
resulting fitted coefficient matrix, and define
\[
  R_k=Z-\widehat Z_k,
  \qquad
  \widehat\sigma_k^2
  =
  \frac{\|R_k\|_F^2}{Nm},
  \qquad
  p_k=m(q_k+1).
\]
Here, $p_k$ counts the $q_k$ loading vectors and the intercept vector, each of dimension $m$, conditional on the fitted score field.

The implemented BIC-type reconstruction criterion is
\begin{equation}
  \operatorname{BIC}(k)
  =
  Nm\log\left\{
    \max\left(\widehat\sigma_k^2,\epsilon_{\mathrm{mach}}\right)
  \right\}
  +
  p_k\log(Nm).
  \label{eq:bic}
\end{equation}
Here, $\epsilon_{\mathrm{mach}}$ is the machine-level positive floor used by the implementation to avoid taking the logarithm of zero. The selected radius is $\widehat k\in\argmin_{k\in\mathcal K}\operatorname{BIC}(k)$,
with the smallest minimizer selected in the event of a tie.

This criterion is a conditional BIC-type reconstruction criterion rather than
a full likelihood-based BIC for the joint latent-field model. It treats the
$Nm$ entries of the coefficient matrix as scalar reconstruction targets and
does not specify a likelihood for their spatial or cross-coefficient dependence. It penalizes the loading and intercept parameters whose number increases with the Chebyshev radius.

The latent field $f$ is treated as a fitted component-score vector and is not included in $p_k$. Its nominal dimension is the same for every candidate radius, so adding a penalty based only on that nominal dimension would
contribute the same term for all $k$ and would not alter the selected radius. Nevertheless, because the latent field is re-estimated jointly with a radius-dependent loading matrix, the parameter count $p_k$ does not represent
a complete effective-degrees-of-freedom calculation for the joint nonlinear fit. The criterion should therefore be interpreted as a
reconstruction-based radius-selection rule, in the spirit of generalized dynamic principal components \citep{pena_yohai_gdpc}, rather than as a likelihood BIC for the full latent spatial model.
Although an approximate leave-one-out criterion and alternative penalized-reconstruction criteria are available in the implementation, all analyses in this paper use the BIC-type criterion in~\eqref{eq:bic}.

\subsection{Several components and reconstruction error}
\label{sec:reconstruction-error}

If $p\geq1$ components are requested, set $E^{(0)}=Z$. At step
$\ell=1,\ldots,p$, one component is fitted to $E^{(\ell-1)}$, with fitted
contribution
$\widehat E_{\mathrm{comp}}^{(\ell)}
=F_{\widehat k_\ell}(\widehat f^{(\ell)})
\widehat C_{\widehat k_\ell}^{(\ell)}\in\R^{N\times m}$,
and the residual is updated as
$E^{(\ell)}=E^{(\ell-1)}-\widehat E_{\mathrm{comp}}^{(\ell)}$.

The cumulative reconstruction after $p$ sequentially extracted components is
\[
  \widehat Z^{(p)}
  =
  \sum_{\ell=1}^{p}\widehat E_{\mathrm{comp}}^{(\ell)},
  \qquad
  E^{(p)}=Z-\widehat Z^{(p)}.
\]
This is sequential residual fitting rather than simultaneous optimization over all components, and the resulting components need not be mutually orthogonal. In the SST application, only one component is fitted, so no deflation step is required.

The cumulative coefficient-space mean squared error (MSE) and normalized mean squared error (NMSE) are
\[
  \operatorname{MSE}_p
  =
  \frac{\left\|Z-\widehat Z^{(p)}\right\|_F^2}{Nm},
  \qquad
  \operatorname{NMSE}_p
  =
  \frac{\left\|Z-\widehat Z^{(p)}\right\|_F^2}{\|Z\|_F^2}.
\]
The denominator $\|Z\|_F^2$ is computed once from the original input matrix and is not recomputed from the residual matrix at any deflation step.

For all cross-method comparisons below, the coefficient field was column-centred across spatial locations, separately for each basis coefficient, before fitting. The same centred field was used as the reference for all reconstruction errors. SFGDPC NMSE was computed using the unweighted coefficient-space Frobenius norm, whereas SFPCA and, where included, functional principal component analysis (FPCA) used functional \(L^2\)-NMSE based on the basis Gram matrix. The relationship between these conventions is given below.

For a non-orthonormal basis with Gram matrix $G$, the corresponding cumulative functional NMSE is
\[
  \operatorname{NMSE}_{L^2,p}
  =
  \frac{
    \displaystyle
    \sum_{s\in\cD}
    \left(z_s-\widehat z_s^{(p)}\right)^\top
    G
    \left(z_s-\widehat z_s^{(p)}\right)
  }{
    \displaystyle
    \sum_{s\in\cD} z_s^\top G z_s
  }.
\]
The 15-term Fourier bases used in both simulation studies and the SST application were numerically orthonormal. In each case, the maximum absolute
deviation of the evaluated Gram matrix from \(I_{15}\) was \(1.38\times10^{-6}\). Coefficient-space NMSE and Gram-weighted functional \(L^2\)-NMSE were numerically equivalent at the reported precision. They coincide exactly when the basis is orthonormal.

\section{Simulation Studies}
\label{sec:simulation}
We conducted two simulation experiments to evaluate SFGDPC under complementary forms of spatial functional dependence. The first used a stationary SFARMA-type process and served as a conventional benchmark for comparison with SFPCA. The second used a finite-support local-neighbourhood transfer mechanism, representing the setting for which neighbourhood-based reconstruction is specifically intended. Ordinary FPCA was included as a non-spatial reference.
In each replication, all methods were fitted to the same simulated coefficient field and retained the same number of components. The coefficient field was column-centred once across grid locations, separately for each basis coefficient, before being supplied to SFGDPC, SFPCA, and FPCA. Cumulative reconstruction error for each method was evaluated relative to this common centred field under the norm conventions described in
Section~\ref{sec:reconstruction-error}.
We report Monte Carlo means and standard deviations of cumulative NMSE. Radius recovery is examined separately for the first SFGDPC component when the data-generating mechanism has a finite neighbourhood radius.

\subsection{Simulation 1: Stationary SFARMA Benchmark}
The first experiment considered a stationary SFARMA-type spatial functional process. This setting provides a standard stationary dependence structure for which spectral spatial functional principal component methods are naturally suited \citep{kuenzer_sfpca}. The purpose of the experiment was to assess whether SFGDPC remained competitive when the data-generating mechanism was not specifically based on a finite local neighbourhood.

The experiment used 1000 Monte Carlo replications for each innovation setting: Gaussian innovations and Student's \(t\) innovations with five degrees of freedom (\(t_5\)). The heavier-tailed condition was included to assess sensitivity to departures from Gaussianity. In each replication, ordinary FPCA, SFPCA, and SFGDPC were fitted to the same simulated spatial functional field, and cumulative reconstruction NMSE was evaluated after one, two, and three retained components. SFGDPC selected its neighbourhood radius separately for each sequential component using the BIC-type reconstruction criterion.

Each simulated field was observed on a \(50\times50\) grid and represented using a 15-term Fourier basis. SFPCA used a supplied maximum filter radius of \(L_{\max}=3\) and a filter-mass threshold of \(\beta=1\), so no additional mass-based filter truncation was applied. Its spectral smoothing bandwidth \(q\) was chosen using the data-adaptive
rule described in Supplementary Section~S1. The spectral density was evaluated on a \(100\times100\) frequency grid. SFGDPC searched over \(k=0,\ldots,3\) using the BIC-type criterion, with convergence tolerance \(10^{-4}\) and a maximum of 500 iterations. All methods were fitted to the same centred field within each replication.
The complete SFARMA data-generating equations, parameter values, basis construction, and implementation settings appear in Supplementary Section~S1.

\begin{table}[ht]
\centering
\small
\setlength{\tabcolsep}{5pt}
\caption{Cumulative reconstruction NMSE under the stationary SFARMA benchmark over 1,000 Monte Carlo replications. Entries are Monte Carlo means with standard deviations in parentheses. Lower NMSE corresponds to better reconstruction; boldface identifies the lowest mean within each row and does not imply statistical significance.}
\label{tab:sfarma-1000-summary}
\begin{tabular}{llccc}
\toprule
Innovation
& Components
& FPCA
& SFPCA
& SFGDPC \\
\midrule
Gaussian
& 1
& 0.5990 (0.0427)
& \textbf{0.4697 (0.0394)}
& 0.4708 (0.0378) \\
Gaussian
& 2
& 0.3719 (0.0316)
& 0.2938 (0.0220)
& \textbf{0.2863 (0.0200)} \\
Gaussian
& 3
& 0.2357 (0.0218)
& 0.1950 (0.0155)
& \textbf{0.1852 (0.0119)} \\
\midrule
$t_5$
& 1
& 0.6019 (0.0425)
& \textbf{0.4751 (0.0396)}
& 0.4752 (0.0378) \\
$t_5$
& 2
& 0.3729 (0.0303)
& 0.2966 (0.0231)
& \textbf{0.2889 (0.0206)} \\
$t_5$
& 3
& 0.2365 (0.0208)
& 0.1965 (0.0169)
& \textbf{0.1862 (0.0121)} \\
\bottomrule
\end{tabular}
\end{table}

Table~\ref{tab:sfarma-1000-summary} shows similar one-component reconstruction performance for SFPCA and SFGDPC under both innovation distributions. After two and three retained components, SFGDPC had the lowest mean cumulative NMSE. The same pattern under Gaussian and \(t_5\) innovations indicates that this finding was not sensitive to the heavier-tailed condition examined.

The BIC-selected SFGDPC radius was strongly concentrated at \(k=1\), including all first-component fits. All selected fits met the implemented numerical stopping criterion, and cumulative NMSE was non-increasing across retained components in every replication. Since the SFARMA process has no unique generating Chebyshev radius, these selections are interpreted descriptively. Supplementary Section~S1 gives the detailed simulation settings
and component-wise summaries of the selected radii.

\subsection{Simulation 2: Local-Neighbourhood Transfer Model}
The second experiment evaluated the intended operating regime of SFGDPC. The coefficient vector at each spatial location was generated from values of a latent spatial field within a finite Chebyshev neighbourhood.

Let $\mathcal S=\{s_{uv}=(u,v):
u=1,\ldots,n_{\mathrm{lat}},\
v=1,\ldots,n_{\mathrm{lon}}\}$ denote the regular rectangular grid, and let
$\bm f=(f_1,\ldots,f_N)^\top$ be a scalar Gaussian spatial field. Here, \(n_{\mathrm{lat}}\) and \(n_{\mathrm{lon}}\) denote the numbers of grid points in the latitude and longitude directions, respectively, and
\(N=n_{\mathrm{lat}}n_{\mathrm{lon}}\). Under the nonstationary treatment, its covariance was generated using a nonstationary covariance construction \citep{paciorek_schervish_2006},
\[
C(s,s') =
\frac{2\rho_s\rho_{s'}}
{\rho_s^2+\rho_{s'}^2}
\exp\left\{
-\frac{\sqrt{2}\lVert s-s'\rVert}
{\sqrt{\rho_s^2+\rho_{s'}^2}}
\right\}.
\]
The local range increased linearly along the first grid direction from
$\rho_{\mathrm{low}}=1$ to $\rho_{\mathrm{high}}=4$, whereas the stationary
control used the constant range $\rho_0=2.5$. Each realization of the latent
field was centred and standardized before constructing the coefficient field.
The treatment and control conditions were paired by applying their respective
covariance transformations to the same replication-specific standard-normal
draws.

For a true neighbourhood radius $k_{\mathrm{true}}$, let
$\mathcal H_{k_{\mathrm{true}}}
=\{(a,b)\in\mathbb Z^2:|a|\vee|b|\le k_{\mathrm{true}}\}$
denote the full Chebyshev offset set, including the centre. For a target
location $s_i=(u_i,v_i)$ and offset $h=(a,b)$, define the clamping map
\[
\pi(s_i,h)
=
\left(
\min\{\max(u_i+a,1),n_{\mathrm{lat}}\},
\min\{\max(v_i+b,1),n_{\mathrm{lon}}\}
\right).
\]
The ordered vector of latent-field values in the clamped Chebyshev window
centred at $s_i$ is
$\bm{\phi}_i(\bm f)
=(f_{\pi(s_i,h)}:h\in\mathcal H_{k_{\mathrm{true}}})^\top$.
The coefficient vector was then generated as
\[
\bm z_i
=
B\bm\phi_i(\bm f)+\bm\varepsilon_i,
\qquad i=1,\ldots,N,
\]
where \(B\) was a deterministic non-separable loading matrix and, conditional on the generated signal, the entries of \(\bm\varepsilon_i\) were independent, mean-zero Gaussian variables with the coefficient-specific standard deviations specified below. The SFGDPC estimator retained the compacting self-padding convention described in the methodology section. Exact correspondence between the generating and fitted neighbourhood constructions is interpreted primarily for interior locations.

The loading matrix was fixed across replications and normalized so that
$\lVert B\rVert_F=\sqrt{m}$. Its explicit construction, including the offset-dependent smooth and directional terms, appears in Supplementary Section~S2.

The main experiment used $n_{\mathrm{lat}}=n_{\mathrm{lon}}=40$, $m=15$,
$k_{\mathrm{true}}=2$, and a noise ratio of $0.15$. The signal at each interior location depended directly on
$(2k_{\mathrm{true}}+1)^2=25$ latent-field values. For coefficient dimension $\ell$, the noise standard deviation was
$\sigma_{\varepsilon,\ell}
=0.15\,\operatorname{sd}(\mu_{1\ell},\ldots,\mu_{N\ell})$,
where $\bm{\mu}_i=B\bm{\phi}_i(\bm f)$.

The observed coefficient matrix was centred across spatial locations, separately for each basis coefficient, before fitting all methods. No column standardization was applied. SFGDPC used this centred matrix without further centring or rescaling. Cumulative reconstruction NMSE for SFGDPC, SFPCA,
and FPCA was evaluated relative to the same centred observed coefficient field.

The nonstationary treatment and stationary control were each evaluated using 100 Monte Carlo replications, and three components were retained. SFGDPC was fitted with \(s_{\max}=3\), using a convergence tolerance of \(10^{-4}\) and a maximum of 500 iterations. The neighbourhood radius was selected separately for each sequential component using the BIC-type reconstruction criterion.

SFPCA retained the same number of components and was fitted using the \texttt{fsd} R implementation \citep{kuenzer_sfpca},
available from the
\href{https://github.com/kuenzer/fsd}{GitHub repository}. The maximum supplied spatial filter radius was \(L_{\max}=9\), no filter-mass truncation was applied (\(\beta=1\)), and the spectral density was evaluated on a \(200\times200\) frequency grid. The \(q\leq1\) fallback was not triggered in any replication, and the
effective SFPCA filter radius was \(L=9\) in all 100 replications under both covariance conditions. The radius $L=9$ is the largest boundary-safe integer satisfying the interior-validity condition $1+2L\leq n-2L$ for a $40\times40$ grid. This boundary-safe condition ensures a nonempty interior region for the SFPCA interior error diagnostic; the reported NMSE values are whole-grid reconstruction errors, with no interior exclusion. Ordinary FPCA was again included as a non-spatial reference.

\begin{table}[H]
\centering
\small
\setlength{\tabcolsep}{6pt}
\caption{Cumulative reconstruction NMSE for the local-neighbourhood transfer simulation. Entries are Monte Carlo means with standard deviations in parentheses. Lower NMSE corresponds to better reconstruction; boldface identifies the lowest mean within each row and does not imply statistical significance.}
\label{tab:modelc-main-results}
\begin{tabular}{llccc}
\toprule
Condition
& \(p\)
& FPCA
& SFPCA
& SFGDPC \\
\midrule
Nonstationary
& 1
& 0.3634 (0.0397)
& 0.0819 (0.0087)
& \textbf{0.0506 (0.0051)} \\
Nonstationary
& 2
& 0.2109 (0.0268)
& 0.0561 (0.0071)
& \textbf{0.0334 (0.0035)} \\
Nonstationary
& 3
& 0.0982 (0.0127)
& 0.0497 (0.0068)
& \textbf{0.0247 (0.0027)} \\
\midrule
Stationary
& 1
& 0.3511 (0.0358)
& 0.0795 (0.0085)
& \textbf{0.0453 (0.0043)} \\
Stationary
& 2
& 0.1991 (0.0239)
& 0.0534 (0.0069)
& \textbf{0.0297 (0.0032)} \\
Stationary
& 3
& 0.0889 (0.0107)
& 0.0469 (0.0065)
& \textbf{0.0221 (0.0024)} \\
\bottomrule
\end{tabular}

\vspace{0.4em}
\begin{minipage}{0.92\textwidth}
\footnotesize
Notes: Results are based on 100 Monte Carlo replications on a \(40\times40\)
grid with \(k_{\mathrm{true}}=2\), \(m=15\), and noise ratio \(0.15\).
SFGDPC used BIC selection over \(k=0,\ldots,3\). SFPCA used
\(L_{\max}=9\), no filter-mass truncation (\(\beta=1\)), and a
\(200\times200\) frequency grid.
\end{minipage}
\end{table}

Table~\ref{tab:modelc-main-results} shows that SFGDPC had the lowest mean cumulative NMSE for every component count under both covariance conditions.
SFGDPC also attained lower NMSE than SFPCA in all 100 replications for each component count, while ordinary FPCA had the largest mean NMSE in every case.

The persistence of the reconstruction difference under the stationary control indicates that it cannot be attributed solely to the spatially varying covariance range. Instead, the result is consistent with an advantage from matching the reconstruction method to the finite-support local transfer mechanism used to generate the coefficient field.

In the main \(40\times40\) design, the first SFGDPC component selected the generating radius \(k_{\mathrm{true}}=2\) in all 100 replications under both covariance conditions. This is evidence of radius recovery under this design, rather than a claim of uniform recovery. Supplementary Section~S3.1 gives the subsequent-component selections.

The supplementary sensitivity study produced exact first-component recovery
under the baseline, \(60\times60\), higher-noise, and lag-zero-loading
(\(k_{\mathrm{true}}=0\)) conditions. The selector instead chose a smaller radius for every replication on the \(20\times20\) grid and under the generating-radius-three condition.
These results indicate that the selected radius represents the dominant reconstruction scale under the examined conditions but need not recover the complete generating support; see Supplementary Section~S3.2.

\subsubsection{Sensitivity to the SFPCA Filter Radius}

We assessed the sensitivity of the comparison to the SFPCA filter radius using
the centred, nonstationary local-neighbourhood transfer design. SFPCA was
evaluated at \(L\in\{3,5,7,9\}\) on the \(40\times40\) grid and at
\(L\in\{9,11,13,14\}\) on the \(60\times60\) grid. Within each replication,
the candidate radii were applied to the same simulated coefficient field,
while SFGDPC was fitted once because its estimates do not depend on \(L\).

Mean SFPCA cumulative NMSE decreased monotonically with \(L\) for all retained component counts. SFGDPC retained lower NMSE for \(p=1,2,3\) at the largest boundary-safe radius on each grid, with the same ordering observed in all 100 replications. This conclusion is restricted to the evaluated radii and the
implemented boundary conventions. Full numerical results are given in Supplementary Table~S4.

\subsubsection{Illustrative First-Component Spatial Filters}

Figure~\ref{fig:modelc-filter-recovery} shows the true local loading kernel and the first-component spatial filters estimated from one realization of the centred nonstationary treatment. The realization used the \(40\times40\) grid, \(k_{\mathrm{true}}=2\), and noise ratio \(0.15\), matching the main Monte Carlo design. Details of filter normalization and sign alignment are provided in Supplementary Section~S2.1.

\begin{figure}[htbp]
\centering
\includegraphics[width=\textwidth]
{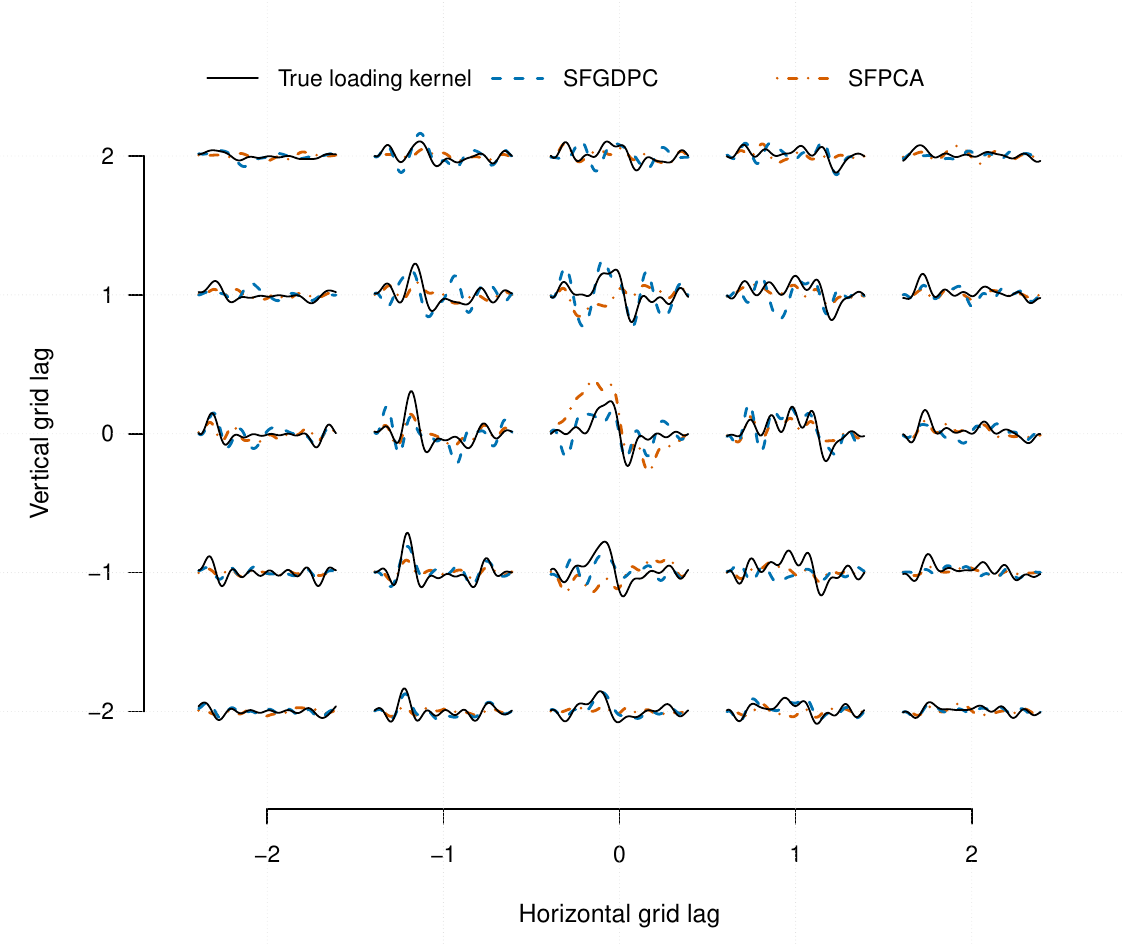}
\caption{True local loading kernel and estimated first-component
SFGDPC and SFPCA filters for one simulated realization of the
nonstationary local-neighbourhood model. SFGDPC selected
\(\widehat{k}_1=2\), while SFPCA used \(L=9\). Only spatial lags
in \([-2,2]^2\) are displayed. Each filter family was normalized
to unit integrated squared mass over its full support, and all
curves use a common amplitude scale.}
\label{fig:modelc-filter-recovery}
\end{figure}

For this realization, the SFGDPC filter is broadly consistent with several principal central and neighbouring-lag features of the true kernel. The displayed SFGDPC lag representation corresponds to interior grid locations; boundary neighbourhoods use the compacting self-padding convention described in Section~\ref{sec:methodology}. The figure is a qualitative illustration, while the replicated cumulative NMSE results in Table~\ref{tab:modelc-main-results} constitute the primary performance evidence.

\FloatBarrier

\section{Indian Ocean Sea Surface Temperature Application}
\label{sec:sst}

\subsection{Data description and exploratory diagnostics}

We illustrate the proposed method using Indian Ocean sea surface temperature (SST) observations from 1982 to 2014. The original data are spatio-temporal, but for the present analysis they are represented as spatially indexed annual temperature curves. For each retained grid location and year, the within-year
temperature profile was represented using a 15-term Fourier basis, following the SST representation
of~\citet{kuenzer_sfpca}. The Fourier basis provides a natural representation of the annual temperature cycle. Each annual field is treated as a collection of functional observations indexed by spatial location. Each calendar year was analysed as a separate spatial-functional field; both methods were fitted independently to all 33 annual fields, and temporal dependence between years was not modelled. The application is spatial functional rather than a joint spatio-temporal analysis.

The preprocessing follows the spatial functional SST setup used for the SFPCA comparison. At each spatial location, the mean annual curve over 1982--2014 was subtracted to form annual anomaly fields. The resulting coefficient array was then spatially coarsened by
retaining every third longitude and latitude grid point, beginning with the first grid point in each direction, while retaining all 33 annual fields. Each annual coefficient field was subsequently centred across the retained spatial locations before fitting SFGDPC; the SFPCA implementation performs the corresponding spatial centring internally. The resulting coefficient array had dimension \(15\times45\times40\times33\), corresponding to 15 Fourier coefficients, 45 longitude grid points, 40 latitude grid points, and 33 annual fields. Equivalently, each annual field was analysed on a \(40\times45\) latitude-by-longitude grid, giving
$
N=40\times45=1800
$
spatial locations. Hence, for each year, the data were represented by a coefficient matrix
$
Z\in\mathbb{R}^{1800\times15},
$
with rows corresponding to spatial grid locations and columns corresponding to Fourier coefficient dimensions.

The global overview in Figure~\ref{fig:indian-overview}(a) uses the National Oceanic and Atmospheric Administration (NOAA) Optimum Interpolation Sea Surface Temperature (OISST) product, which combines satellite and in situ observations on a regular global grid \citep{reynolds_etal_2007}. The study region exhibits a pronounced north--south temperature gradient, with warmer conditions in the north and cooler conditions in the south (Figure~\ref{fig:indian-overview}). The representative annual curves further reveal spatial differences in both average temperature and seasonal variation, motivating the treatment of the observations as spatially indexed functions.

\begin{figure}[!htbp]
\centering

\begin{minipage}{0.96\textwidth}
    \centering
    {\large\textbf{(a)}}\par
    \vspace{0.25em}
    \includegraphics[width=\textwidth]
    {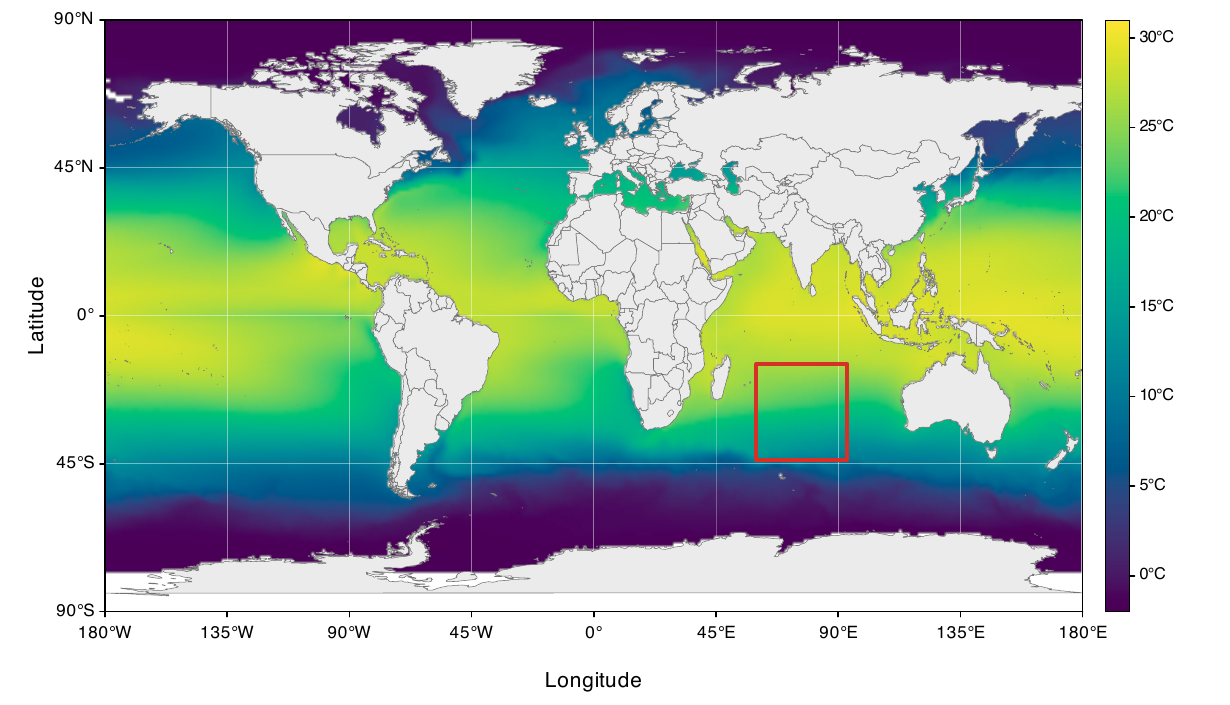}
\end{minipage}

\vspace{0.8em}

\begin{minipage}[t]{0.48\textwidth}
    \centering
    {\large\textbf{(b)}}\par
    \vspace{0.25em}
    \includegraphics[width=\textwidth]
    {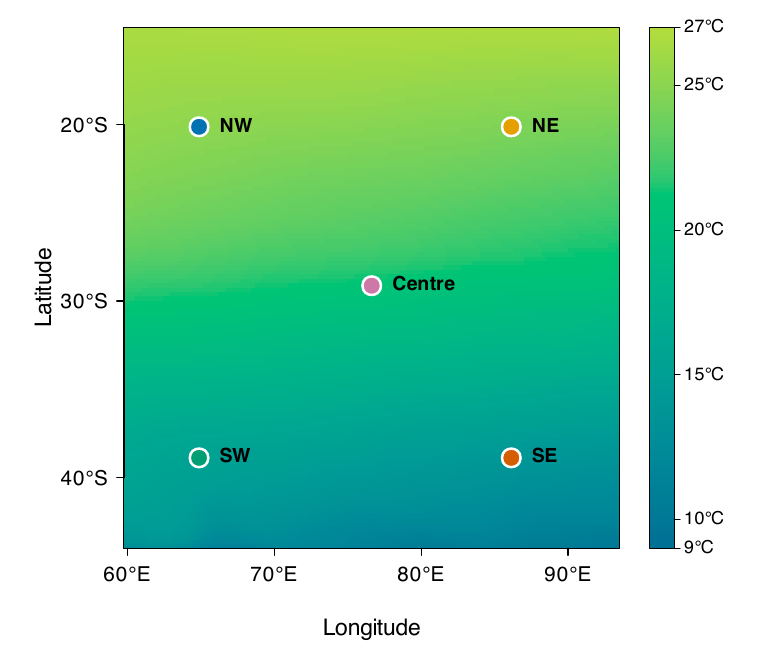}
\end{minipage}
\hfill
\begin{minipage}[t]{0.48\textwidth}
    \centering
    {\large\textbf{(c)}}\par
    \vspace{0.25em}
    \includegraphics[width=\textwidth]
    {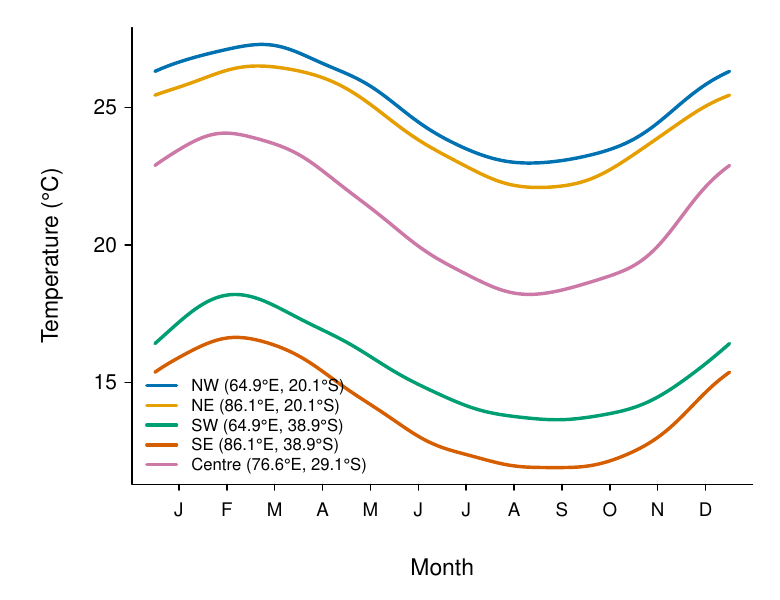}
\end{minipage}

\caption{Geographical overview and sea surface temperature (SST) summaries for the Indian Ocean application, 1982--2014. 
(a) Global long-term mean NOAA OISST, displayed at approximately \(1^\circ\) spacing from the \(0.25^\circ\) source grid; the red rectangle marks the study region. 
(b) Long-term mean SST calculated from the functional dataset on the native \(0.25^\circ\) grid; the coloured points mark the five selected locations. 
(c) Average annual SST curves at these locations.}
\label{fig:indian-overview}
\end{figure}

Figure~\ref{fig:indian-correlogram} presents an empirical functional spatial correlogram of the raw annual SST curves after centring across locations on the coarsened diagnostic grid. At each Chebyshev lag \(h\), normalized coefficient inner products were averaged over valid pairs at offsets \((h,0)\), \((0,h)\), \((h,h)\), and
\((h,-h)\), and then over years. This exploratory summary was computed without subtracting the location-specific mean annual
curves used to construct the anomaly fields analysed in Section~4.2. The plotted similarity decreases as the spatial lag increases, with stronger similarity among neighbouring functional
observations than among distant ones.

This exploratory pattern is consistent with local spatial dependence and motivates the use of a local-neighbourhood reconstruction method. It should not be interpreted as evidence that the SST field follows an exact finite-neighbourhood data-generating mechanism.

\begin{figure}
\centering
\includegraphics[width=0.72\textwidth]{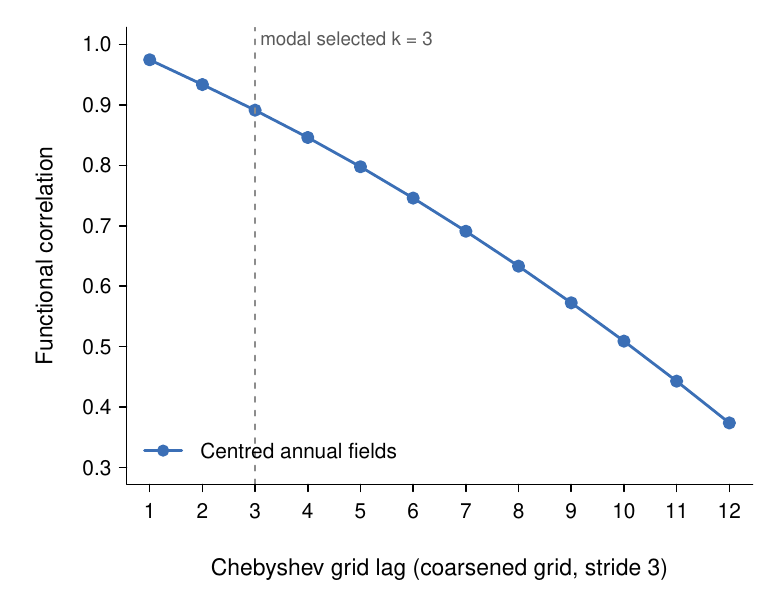}
\caption{Empirical functional spatial correlogram on the coarsened diagnostic grid. Functional similarity is computed from normalized
Fourier-coefficient inner products for raw annual SST curves after centring across locations. At each lag, values are averaged over valid pairs in four directions and then over years. The dashed vertical line marks the modal BIC-selected SFGDPC radius, \(k=3\).}
\label{fig:indian-correlogram}
\end{figure}

\FloatBarrier
\subsection{Reconstruction comparison}

For each annual coefficient field, we fitted one SFGDPC component and one SFPCA component. This is a one-component reconstruction comparison across 33 annual SST anomaly fields.

SFGDPC was fitted with a fixed maximum Chebyshev search radius
$
s_{\max}=4,
$
without additional column standardization of the coefficient matrix. The BIC-type reconstruction criterion selected the spatial radius \(k\) separately for each year. The convergence tolerance was \(10^{-4}\), and the maximum number of iterations was 500. The production driver also checked the fitted dimensions and verified equality between the stored NMSE and the manually computed NMSE. The selected fit for 2004 reached the 500-iteration limit and was
retained with the convergence flag set to \texttt{FALSE}; its attained reconstruction was used in the reported comparison. 

For comparison, SFPCA was fitted to the same spatially coarsened annual spatial-functional data with one component using the \texttt{fsd} R
implementation for SFPCA \citep{kuenzer_sfpca}, available from the
\href{https://github.com/kuenzer/fsd}{GitHub repository}. The main high-cap benchmark used a maximum supplied filter radius of \(L_{\max}=42\), a filter-mass threshold of \(\beta=0.95\), and a \(400\times400\) frequency grid for spectral-density estimation. Under this setting, the selected effective SFPCA filter radius did not reach the supplied cap; the largest selected value was \(L=41\).

We also report an SFPCA benchmark with \(L_{\max}=9\). Both SST benchmarks used \(\beta=0.95\), a \(400\times400\) frequency grid, and whole-grid reconstruction error without interior exclusion; only the supplied maximum filter radius differed.

Applying the interior-validity condition \(1+2L \le n-2L\) to the limiting grid dimension \(n=40\) gives \(L\le9.75\), so the largest boundary-safe integer radius is \(L=9\). The high-cap \(L_{\max}=42\) run is treated as the main empirical SFPCA benchmark
because it allows a wider candidate filter range under the whole-grid reconstruction criterion. 

\begin{table}
\centering
\caption{Summary of one-component whole-grid reconstruction performance for the
Indian Ocean SST anomaly fields, 1982--2014. Two SFPCA benchmarks are reported:
a boundary-safe reference setting with \(L_{\max}=9\), and a generous high-cap
setting with \(L_{\max}=42\), \(\beta=0.95\), and a \(400\times400\) frequency grid. SFGDPC was fitted with \(s_{\max}=4\) and BIC-selected \(k\).}
\label{tab:indian-summary-comparison}
\begin{tabular}{lccc}
\toprule
Method & Mean NMSE & Median NMSE & Years best \\
\midrule
SFPCA, \(L_{\max}=42\) & 0.465 & 0.479 & 0/33 \\
SFPCA, \(L_{\max}=9\)  & 0.518 & 0.530 & 0/33 \\
SFGDPC, \(s_{\max}=4\) & 0.284 & 0.283 & 33/33 \\
\bottomrule
\end{tabular}
\end{table}

Table~\ref{tab:indian-summary-comparison} shows that SFGDPC had lower one-component whole-grid NMSE than both SFPCA benchmarks in all 33 annual fields. Mean NMSE was \(0.284\) for SFGDPC, compared with \(0.465\) for the high-cap SFPCA benchmark and \(0.518\) for the boundary-safe benchmark.

\begin{figure}
\centering
\includegraphics[width=0.85\textwidth]{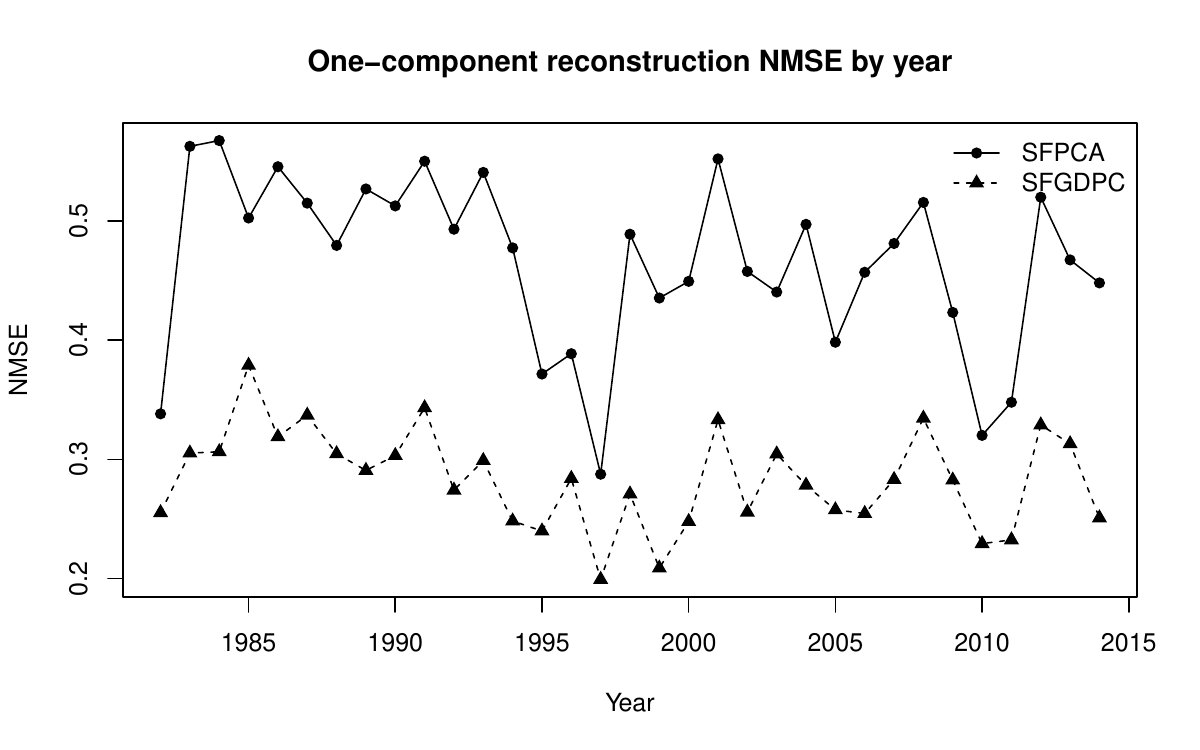}
\caption{One-component whole-grid reconstruction NMSE by year for the Indian Ocean SST anomaly fields. The SFPCA curve corresponds to the main high-cap benchmark with \(L_{\max}=42\), \(\beta=0.95\), and a \(400\times400\) frequency grid. Both methods were fitted to the same coarsened annual coefficient fields.}
\label{fig:indian-nmse-by-year}
\end{figure}

Figure~\ref{fig:indian-nmse-by-year} displays the year-by-year one-component NMSE values. The SFGDPC curve lies below the SFPCA curve throughout the study period, indicating that the lower average error is not driven by a single anomalous year.
Figure~\ref{fig:indian-selected-k} shows the BIC-selected SFGDPC spatial radius for each year. The selected radii were concentrated at small local
neighbourhoods. The BIC-selected radius was \(k=3\) in 25 of the 33 years, \(k=2\) in 6 years, and \(k=4\) in 2 years. The fitted leading component usually used a \(7\times7\) Chebyshev window. The two years with \(k=4\) reach the imposed search cap \(s_{\max}=4\). Overall, the selected radii indicate predominantly local reconstruction scales within the candidate set, most often \(k=2\) and \(k=3\), while leaving open the possibility that a
larger search cap could be relevant for a small number of years.

\begin{figure}[ht]
\centering
\includegraphics[width=0.85\textwidth]{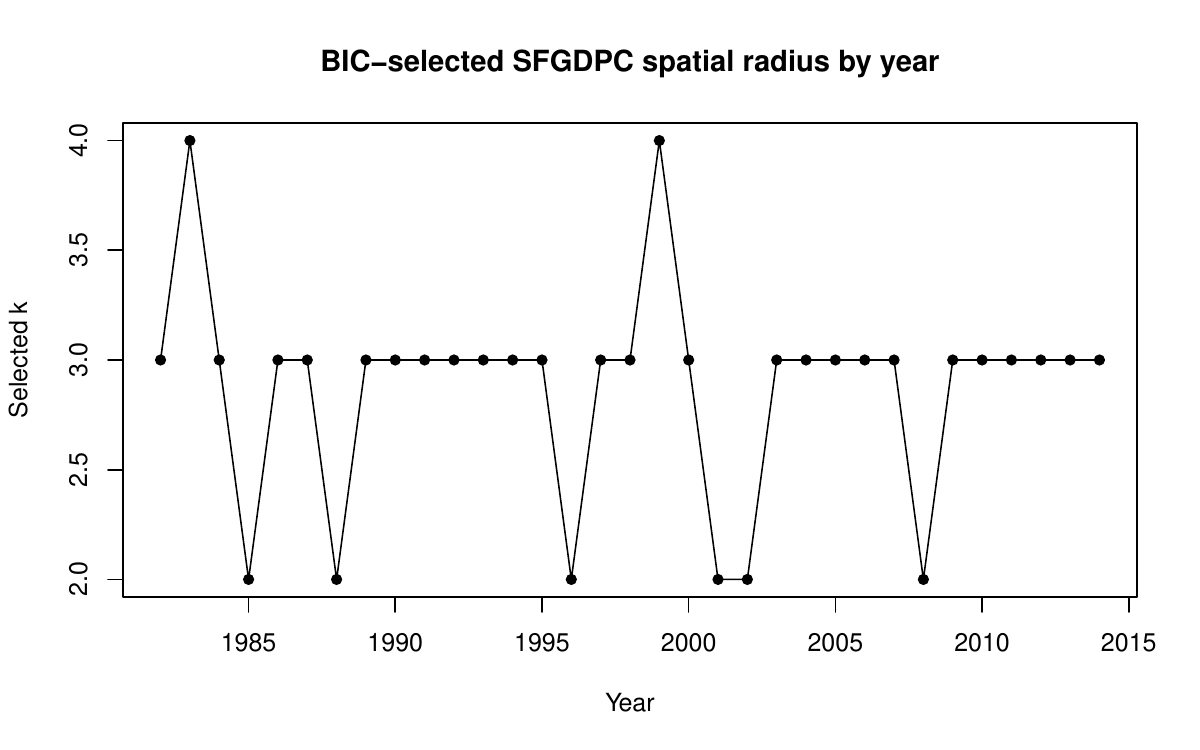}
\caption{BIC-selected SFGDPC spatial radius by year for the Indian Ocean SST anomaly fields.}
\label{fig:indian-selected-k}
\end{figure}

Neither the annual reconstruction errors nor the selected radii exhibited a clear monotonic temporal pattern.

\FloatBarrier

\section{Discussion}
\label{sec:discussion}

SFGDPC builds on the finite-sample reconstruction perspective underlying
GDPC \citep{pena_yohai_gdpc}, which was developed for vector time series
with the aim of adapting to possible nonstationary behaviour. In the spatial functional setting, SFPCA \citep{kuenzer_sfpca} provides a spectral principal-component decomposition for weakly stationary functional random fields on regular grids. SFGDPC extends the reconstruction principle of GDPC
to functional observations on a two-dimensional spatial grid and selects the neighbourhood radius as part of each component fit. This reconstruction-based approach complements spectral SFPCA. Its finite-sample criterion does not require second-order stationarity, isotropy, or estimation of a spatial spectral density.

The contrasting simulation settings clarify the role of the proposed method. In the stationary SFARMA benchmark, mean one-component NMSE was nearly identical for the two spatial methods, with SFPCA marginally lower. SFGDPC gave lower mean cumulative NMSE after two and three components were retained. Its advantage is stronger in the local-neighbourhood transfer experiment and persists under both the stationary control and the nonstationary treatment.

The persistence of the reconstruction advantage under the stationary control shows that it does not depend on the nonstationary covariance specification. The result is consistent with the suitability of neighbourhood-based
reconstruction for the finite-support local transfer mechanism used in this experiment. The SFPCA-radius sensitivity analysis showed lower whole-grid
cumulative NMSE for SFGDPC at every radius examined, under the implemented boundary conventions.

The selected SFGDPC radius provides an interpretable estimate of the dominant
spatial reconstruction scale. Exact recovery of the generating radius in the main simulation supports this interpretation, while the sensitivity results show that exact radius recovery is not uniform across all examined designs. In the SST application, SFGDPC had lower one-component whole-grid reconstruction error in all 33 annual fields. Within the specified search range \(k=0,\ldots,4\), the selected radii were concentrated at \(k=2\) and
\(k=3\), with the upper bound \(k=4\) selected in only two years. Together,
the stationary SFARMA benchmark and the SST application indicate that the
usefulness of SFGDPC is not confined to data generated from its own
reconstruction mechanism.

SFGDPC is particularly appropriate when the objective is accurate whole-field reconstruction through a compact neighbourhood, when the neighbourhood scale is of interest, or when stationarity is questionable. SFPCA remains a natural alternative when the scientific objective is a frequency-domain description of spatial functional dependence under weak stationarity. The choice between the methods should therefore reflect the
scientific objective and anticipated dependence structure.

The finite-sample proposition establishes existence of a minimizer for the fixed-radius problem restricted to $\cF_k$, and uniqueness of the conditional loading update when the spatial design has full column rank. The implemented Moore--Penrose inverse fallback also provides a computational update in rank-deficient cases. The current implementation is designed for complete regular rectangular grids and uses a dense spatial-field update, which may limit scalability to larger grids.

\section{Conclusion}
\label{sec:conclusion}

This paper introduced SFGDPC as a neighbourhood-based reconstruction method for spatial functional data on regular grids. For each candidate radius, SFGDPC alternates between estimation of the latent spatial field and the functional loadings, after which the component-specific radius is selected using the BIC-type reconstruction criterion. This provides a local and interpretable representation without requiring a stationary covariance model in the finite-sample fitting criterion. These results support its use when spatial information is concentrated in local neighbourhoods, while its performance in the stationary benchmark indicates that it remains competitive outside that setting.

The framework forms a basis for further development of reconstruction-based spatial functional dimension reduction. Further work may develop sparse or structured field-update solvers for larger grids, extend the neighbourhood construction to irregular spatial domains, and further evaluate data-driven selection of the number of components.

\singlespacing

\noindent{\large\bfseries Funding\par}
\vspace{0.5em}

This work was supported by the Universiti Malaya Research Excellence Grant,
Universiti Malaya [grant number UMREG044-2024].

\vspace{1.5em}

\bibliographystyle{elsarticle-num-names}
\bibliography{references_BDR}

@article{pena_yohai_gdpc,
  author  = {Pe{\~n}a, Daniel and Yohai, Victor J.},
  title   = {Generalized {D}ynamic {P}rincipal {C}omponents},
  journal = {J. Am. Stat. Assoc.},
  volume  = {111},
  number  = {515},
  pages   = {1121--1131},
  year    = {2016},
  doi     = {10.1080/01621459.2015.1072542}
}

@article{kuenzer_sfpca,
  author  = {Kuenzer, Thomas and H{\"o}rmann, Siegfried and Kokoszka, Piotr},
  title   = {Principal Component Analysis of Spatially Indexed Functions},
  journal = {J. Am. Stat. Assoc.},
  year    = {2021},
  volume  = {116},
  number  = {535},
  pages   = {1444--1456},
  doi     = {10.1080/01621459.2020.1732395}
}

@article{paciorek_schervish_2006,
  author  = {Paciorek, Christopher J. and Schervish, Mark J.},
  title   = {Spatial Modelling Using a New Class of Nonstationary Covariance Functions},
  journal = {Environmetrics},
  year    = {2006},
  volume  = {17},
  number  = {5},
  pages   = {483--506},
  doi     = {10.1002/env.785}
}

@article{delicado2010spatial,
  author  = {Delicado, Pedro and Giraldo, Ram{\'o}n and Comas, Carles and Mateu, Jorge},
  title   = {Statistics for Spatial Functional Data: Some Recent Contributions},
  journal = {Environmetrics},
  year    = {2010},
  volume  = {21},
  number  = {3--4},
  pages   = {224--239},
  doi     = {10.1002/env.1003}
}

@book{mateu2021geostatistical,
  editor    = {Mateu, Jorge and Giraldo, Ram{\'o}n},
  title     = {Geostatistical Functional Data Analysis},
  publisher = {John Wiley \& Sons},
  address   = {Hoboken, NJ},
  year      = {2021},
  doi       = {10.1002/9781119387916},
  isbn      = {9781119387848}
}

@article{reynolds_etal_2007,
  author  = {Reynolds, Richard W. and Smith, Thomas M. and Liu, Chunying and Chelton, Dudley B. and Casey, Kenneth S. and Schlax, Michael G.},
  title   = {Daily High-Resolution-Blended Analyses for Sea Surface Temperature},
  journal = {J. Clim.},
  year    = {2007},
  volume  = {20},
  number  = {22},
  pages   = {5473--5496},
  doi     = {10.1175/2007JCLI1824.1}
}

@article{calculli_etal_2015,
  author  = {Calculli, Crescenza and Fass{\`o}, Alessandro and Finazzi, Francesco and Pollice, Alessio and Turnone, Annarita},
  title   = {Maximum Likelihood Estimation of the Multivariate Hidden Dynamic Geostatistical Model with Application to Air Quality in {Apulia, Italy}},
  journal = {Environmetrics},
  year    = {2015},
  volume  = {26},
  number  = {6},
  pages   = {406--417},
  doi     = {10.1002/env.2345}
}

@article{wang_finazzi_fasso_2021,
  author  = {Wang, Yaqiong and Finazzi, Francesco and Fass{\`o}, Alessandro},
  title   = {{D-STEM} v2: A Software for Modeling Functional Spatio-Temporal Data},
  journal = {J. Stat. Softw.},
  year    = {2021},
  volume  = {99},
  number  = {10},
  pages   = {1--29},
  doi     = {10.18637/jss.v099.i10}
}

@article{maranzano_otto_fasso_2023,
  author  = {Maranzano, Paolo and Otto, Philipp and Fass{\`o}, Alessandro},
  title   = {Adaptive {LASSO} Estimation for Functional Hidden Dynamic Geostatistical Models},
  journal = {Stoch. Environ. Res. Risk Assess.},
  year    = {2023},
  volume  = {37},
  number  = {9},
  pages   = {3615--3637},
  doi     = {10.1007/s00477-023-02466-5}
}

@article{pena_smucler_yohai_2020,
  author  = {Pe{\~n}a, Daniel and Smucler, Ezequiel and Yohai, Victor J.},
  title   = {gdpc: An {R} Package for Generalized Dynamic Principal Components},
  journal = {J. Stat. Softw.},
  year    = {2020},
  volume  = {92},
  number  = {2},
  pages   = {1--23},
  doi     = {10.18637/jss.v092.c02}
}

@article{siahmed2025smfpca,
  author  = {Si-ahmed, Idris and Hamdad, Leila and Agonkoui, Christelle Judith and Kande, Yoba and Dabo-Niang, Sophie},
  title   = {Principal Component Analysis of Multivariate Spatial Functional Data},
  journal = {Big Data Res.},
  volume  = {39},
  pages   = {100504},
  year    = {2025},
  doi     = {10.1016/j.bdr.2024.100504}
}

@article{MenafoglioSecchiDallaRosa2013,
  author  = {Menafoglio, Alessandra and Secchi, Piercesare and Dalla Rosa, Matilde},
  title   = {A Universal Kriging Predictor for Spatially Dependent Functional Data of a {Hilbert} Space},
  journal = {Electron. J. Stat.},
  year    = {2013},
  volume  = {7},
  pages   = {2209--2240},
  doi     = {10.1214/13-EJS843}
}

@article{liu2012remote,
  author  = {Liu, Chong and Ray, Surajit and Hooker, Giles and Friedl, Mark},
  title   = {Functional Factor Analysis for Periodic Remote Sensing Data},
  journal = {Ann. Appl. Stat.},
  year    = {2012},
  volume  = {6},
  number  = {2},
  pages   = {601--624},
  doi     = {10.1214/11-AOAS518}
}

\end{document}


\begin{center}
{\large\bfseries Supplementary Material}\\[0.2em]
{\normalsize\bfseries
Neighbourhood-Based Generalized Dynamic Principal Components\\
for Spatial Functional Data}
\end{center}
\vspace{0.3em}

\setstretch{1.2}

\textbf{Notation.} Throughout this supplement, \(d\) denotes the number of basis coefficients per functional observation, written \(m\) in the main article; here \(d=m=15\) in all experiments.

\section{Stationary SFARMA Benchmark}
\label{sec:supp-sfarma}

This section provides the complete data-generating mechanism and implementation settings for the stationary SFARMA benchmark reported in the simulation section of the main article.

Let \(\bm{X}_{s,t}\in\mathbb{R}^{d}\), \(s,t=1,\ldots,50\), denote the basis-coefficient vector of the functional observation at grid location \((s,t)\). The coefficient field was generated from the spatial functional autoregressive moving-average model
\[
\bm{X}_{s,t}
=
A_{10}\bm{X}_{s-1,t}
+
A_{01}\bm{X}_{s,t-1}
+
\bm{\varepsilon}_{s,t}
+
B_{10}\bm{\varepsilon}_{s-1,t}
+
B_{01}\bm{\varepsilon}_{s,t-1}
+
B_{11}\bm{\varepsilon}_{s-1,t-1}.
\]
Thus, the autoregressive lag set was \(\mathcal{P}=\{(1,0),(0,1)\}\), and the moving-average lag set was
\(\mathcal{Q}=\{(0,0),(1,0),(0,1),(1,1)\}\).

The functional observations were represented using \(d=15\) Fourier basis functions on \([0,1]\). The innovation scale matrix was diagonal, with \(\Sigma_{\ell\ell}=\exp(-\ell/2)\) for
\(\ell=1,\ldots,d\), and zero off-diagonal entries.

The autoregressive and moving-average operators were regenerated in each Monte Carlo replication. For an operator associated with spatial lag \(\bm{h}\), an initial \(d\times d\) random matrix was generated as
\[
\widetilde{M}_{\bm{h},ab}
=
\frac{\xi_{\bm{h},ab}}
{\sqrt{a^2+b^2}},
\qquad
a,b=1,\ldots,d,
\]
where the \(\xi_{\bm{h},ab}\) were independent standard-normal random variables.

For each nonzero moving-average lag, the operator was normalized as
\[
B_{\bm{h}}
=
0.5\,
\frac{\widetilde{M}_{\bm{h}}}
{\lVert\widetilde{M}_{\bm{h}}\rVert_F},
\qquad
\bm{h}\in
\{(1,0),(0,1),(1,1)\},
\]
whereas the zero-lag moving-average operator was fixed as \(B_{00}=I_d\).

For the two autoregressive lags, first define
\[
\overline{A}_{\bm{h}}
=
\frac{\widetilde{M}_{\bm{h}}}
{\lVert\widetilde{M}_{\bm{h}}\rVert_F},
\qquad
\bm{h}\in\mathcal{P}.
\]
The common scaling constant was
\[
c_A
=
\frac{\sqrt{0.2}}
{
\sqrt{
\displaystyle
\max_{\bm{h},\bm{g}\in\mathcal{P}}
\left\|
\overline{A}_{\bm{h}}
\overline{A}_{\bm{g}}
\right\|_F
}
},
\]
and the final autoregressive operators were
\(A_{\bm h}=c_A\overline A_{\bm h}\), \(\bm h\in\mathcal P\). This scaling was used to produce a stable stationary SFARMA field.

The fields were generated using \texttt{fsd::fsd.sfarma} on a \(50\times50\) grid with a burn-in of 10, with \texttt{do.fixed.point = FALSE} and \texttt{max.iter = -1}. No additional boundary correction was applied outside the simulation convention implemented by \texttt{fsd.sfarma}.

Two innovation distributions were considered. The Gaussian condition used \texttt{noise = "normal"}, whereas the heavier-tailed condition used \texttt{noise = 5}, corresponding to the \(t_5\) innovation
option in \texttt{fsd.sfarma}.

 The Gaussian and \(t_5\) experiments used reproducible parallel random-number streams with initial seeds 147 and 10147, respectively. Both experiments used three parallel workers with L'Ecuyer--CMRG streams initialized by \texttt{clusterSetRNGStream} and static task scheduling. Exact reproduction of these runs requires the same worker count and scheduling. Within each replication, the simulated coefficient field was column-centred across the 2500 spatial locations, separately for each of the \(d=15\) basis coefficients. FPCA, SFPCA, and SFGDPC were then applied to the same centred field.

All methods retained three components. SFPCA selected its spectral smoothing parameter \(q\) using the data-adaptive rule implemented in \texttt{spca.automatic}. 
It was fitted with \(L_{\max}=3\), \(\beta=1\), and \(\texttt{freq.res}=50\). For a positive value of \texttt{freq.res}, each frequency axis contains twice that number of grid points. Thus, \texttt{freq.res = 50} gives a \(100\times100\) frequency grid. The setting \texttt{freq.res = 0} uses only the zero frequency. With \(\beta=1\), no additional filter-mass truncation was applied, so the effective filter radius was \(L=3\) by construction.

Ordinary FPCA was obtained through the same implementation by setting \(q=0\), \(L_{\max}=0\), and \(\texttt{freq.res}=0\). For the FPCA call, \texttt{inner} was set equal to the effective SFPCA radius. The effective radius was \(L=3\) in every replication, so \(\texttt{inner}=3\) was used throughout. The \texttt{inner} argument controls an additional interior-error diagnostic; the reported NMSE uses the whole grid.

SFGDPC was fitted with \(s_{\max}=3\), \texttt{crit = "BIC"}, \texttt{normalize = 1}, a convergence tolerance of \(10^{-4}\), and a maximum of 500 alternating least-squares iterations. No nondefault initialization or number of starts was supplied in the simulation driver, so the package defaults were used. The setting \texttt{normalize = 1} applies no additional centring
or scaling to the supplied coefficient matrix. The latent score field is still centred and standardized during alternating least-squares fitting.

For all three methods, performance after \(p\) components was evaluated using the cumulative reconstruction. In particular, the SFGDPC error
was computed as
\[
\operatorname{NMSE}_{\mathrm{SFGDPC}}(p)
=
\frac{
\left\|
Z-\displaystyle\sum_{j=1}^{p}\widehat{Z}_j
\right\|_F^2
}{
\lVert Z\rVert_F^2
},
\qquad
p=1,2,3,
\]
so that its definition agreed with the cumulative SFPCA reconstruction error.

The final experiment comprised 1000 Monte Carlo replications for each innovation distribution.

The SFARMA benchmark also provides an off-model assessment of the neighbourhood-radius selector. Although the model is defined using finite autoregressive and moving-average lag sets, the resulting autoregressive spatial process is not generated through a single finite Chebyshev loading
neighbourhood. Consequently, there is no unique
\(k_{\mathrm{true}}\) against which exact, under-, or over-selection can be defined. The selected radii in this benchmark are therefore reported descriptively rather than interpreted as recovery frequencies.

Table~\ref{tab:supp-sfarma-k} summarizes the neighbourhood radii selected by SFGDPC. The selected radius was concentrated at \(k=1\) under both innovation distributions.

\begin{table}[ht]
\centering
\small
\caption{Mean and standard deviation of the BIC-selected SFGDPC radius
under the stationary SFARMA benchmark over 1000 Monte Carlo
replications.}
\label{tab:supp-sfarma-k}
\begin{tabular}{lccc}
\toprule
Innovation & Component & Mean selected \(k\) & SD selected \(k\) \\
\midrule
Gaussian & 1 & 1.000 & 0.000 \\
Gaussian & 2 & 1.004 & 0.063 \\
Gaussian & 3 & 1.002 & 0.045 \\
\midrule
\(t_5\) & 1 & 1.000 & 0.000 \\
\(t_5\) & 2 & 1.004 & 0.063 \\
\(t_5\) & 3 & 1.007 & 0.083 \\
\bottomrule
\end{tabular}
\end{table}

\section{Construction of the Local Loading Matrix}
\label{sec:supp-loading-matrix}

For the local-neighbourhood transfer simulation, let
\[
\mathcal{D}_{k_{\mathrm{true}}}
=
\left\{
(a,b):
-k_{\mathrm{true}}
\leq a,b
\leq k_{\mathrm{true}}
\right\}
\]
denote the ordered set of spatial offsets. With \(k_{\mathrm{true}}=2\), the number of offsets was
\(M_{\mathrm{true}}=(2k_{\mathrm{true}}+1)^2=25\).

The offsets were generated using \texttt{expand.grid(di = -2:2, dj = -2:2)}. Consequently, the first coordinate varied fastest and the second coordinate varied slowest. In particular, the ordering began as
\((-2,-2),(-1,-2),(0,-2),(1,-2),(2,-2),(-2,-1),\ldots\).\\
Let \((a_r,b_r)\), \(r=1,\ldots,M_{\mathrm{true}}\), denote the
\(r\)th offset in this ordering. The loading matrix
\(B=[\bm b_1,\ldots,\bm b_{M_{\mathrm{true}}}]\), of dimension
\(d\times M_{\mathrm{true}}\), assigns a \(d\)-dimensional coefficient
profile to every spatial offset.

Define the coefficient-domain evaluation points
\(t_\ell=(\ell-1)/(d-1)\), \(\ell=1,\ldots,d\), with \(d=15\).
For offset \(r\), the spatial attenuation weight was \(w_r=\exp\{-0.20(a_r^2+b_r^2)\}\).

The smooth offset-specific profile was
\[
p_r(t)
=
\sin\left\{
\frac{(r+1)\pi t}{2}
\right\}
\exp(-0.8t)
+
0.4
\cos\left\{
\frac{(r+2)\pi t}{3}
\right\}
\exp(-0.5t),
\]
where \(r=1,\ldots,25\) follows the one-based indexing used in the
simulation code.

The directional contribution was
\(d_r(t)=0.25a_r\sin(2\pi t)+0.25b_r\cos(2\pi t)\).

Before normalization, the \(\ell\)th entry of the \(r\)th loading profile was therefore
\[
\widetilde{B}_{\ell r}
=
w_r
\left\{
p_r(t_\ell)
+
d_r(t_\ell)
\right\},
\qquad
\ell=1,\ldots,d,
\quad
r=1,\ldots,25.
\]

The complete loading matrix was finally normalized according to
\[
B=\sqrt{d}\,
\frac{\widetilde B}{\lVert\widetilde B\rVert_F},
\qquad
\lVert B\rVert_F=\sqrt d.
\]
The same deterministic matrix \(B\) was used in every Monte Carlo replication.

For a target grid location \((i,j)\), the latent-field value associated with offset \((a_r,b_r)\) was obtained from
\[
\widetilde{i}_{r}
=
\min\{
\max(i+a_r,1),
n_{\mathrm{lat}}
\},
\qquad
\widetilde{j}_{r}
=
\min\{
\max(j+b_r,1),
n_{\mathrm{lon}}
\}.
\]
Thus, coordinates extending beyond the observed grid were clamped to the nearest observed boundary coordinate. The noiseless coefficient vector was then
\[
\bm{\mu}_{ij}
=
B
\left(
f_{\widetilde{i}_1,\widetilde{j}_1},
\ldots,
f_{\widetilde{i}_{25},\widetilde{j}_{25}}
\right)^\top.
\]

For interior locations, this construction coincides with a complete Chebyshev neighbourhood of radius \(k_{\mathrm{true}}=2\). At boundary locations, the generator used nearest-boundary clamping, whereas the fitted SFGDPC method used the compacting self-padding
convention described in the main article. The generating and fitted boundary conventions were therefore not identical, and exact correspondence with the generating mechanism is interpreted primarily for interior locations.

\subsection{Construction of the illustrative filter comparison}
\label{sec:supp-filter-comparison}

The illustrative filter comparison in the main article was constructed from one simulated realization of the nonstationary local-neighbourhood model described in Section~3.2 of the main article. The displayed true
local loading kernel was formed directly from the 25 columns of the matrix \(B\), with each column assigned to its corresponding offset in \(\mathcal{D}_{2}\). It was not obtained from a separate principal component decomposition of \(B\).

Let \(G_{ab}=\int_0^1\psi_a(t)\psi_b(t)\,\mathrm{d}t\) denote the Gram matrix of the Fourier basis. For a collection of coefficient vectors \(C=[\bm c_1,\ldots,\bm c_J]\), the functional \(L^2\) mass used in the plotting script was
\[
\mathcal{M}(C)
=
\operatorname{tr}
\left(
C^\top G C
\right)
=
\sum_{j=1}^{J}
\bm{c}_j^\top G\bm{c}_j.
\]

The true kernel was normalized over its complete support
\(\mathcal{D}_2\). Each SFGDPC component was normalized over its selected compact neighbourhood support. Each SFPCA filter was first normalized using its complete support
\(\{\bm\ell:\lVert\bm\ell\rVert_\infty\leq L\}\), with \(L=9\),
and was then restricted to the displayed window
\(\mathcal W=\{\bm\ell:\lVert\bm\ell\rVert_\infty\leq2\}\).

The proportion of the SFPCA filter mass contained in the displayed window was calculated as
\(P_{\mathcal W}=\mathcal M(C_{\mathcal W})/
\mathcal M(C_{\mathrm{full}})\).

The signs of estimated components and filters are arbitrary. The first SFGDPC and SFPCA filter families were therefore oriented relative to the true lag-zero loading profile. Specifically, an estimated filter family was multiplied by \(-1\) whenever \(\bm c_0^\top G\bm b_0<0\), where \(\bm c_0\) is the estimated lag-zero coefficient vector and \(\bm b_0\) is the true lag-zero loading vector.

For the displayed SFPCA first filter, the reported window-mass proportion was \(0.765\).

\section{BIC Selection of the Neighbourhood Radius}
\label{sec:supp-radius-results}

\subsection{Selections across sequential components}
\label{sec:supp-sequential-k}

Table~\ref{tab:supp-modelc-k} reports the radii selected for all three sequential SFGDPC components in the main local-neighbourhood simulation. Recovery of the generating radius is interpreted primarily for the first component. Components 2 and 3 were fitted to sequential residual fields; their selected radii therefore describe the spatial scales remaining after removal of the preceding components. 

\begin{table}[ht]
\centering
\small
\caption{BIC-selected SFGDPC radius in the local-neighbourhood transfer
simulation over 100 Monte Carlo replications. For Components 2 and 3,
selection of \(k=2\) is descriptive because the components were fitted to
sequential residual fields.}
\label{tab:supp-modelc-k}
\begin{tabular}{llccc}
\toprule
\shortstack[l]{Covariance\\condition}
& Component
& \shortstack{Mean selected\\\(k\)}
& \shortstack{SD of selected\\\(k\)}
& \shortstack{Proportion selecting\\\(k=2\)} \\
\midrule
Nonstationary & 1 & 2.00 & 0.000 & 1.00 \\
Nonstationary & 2 & 1.19 & 0.394 & 0.19 \\
Nonstationary & 3 & 1.10 & 0.302 & 0.10 \\
\midrule
Stationary control & 1 & 2.00 & 0.000 & 1.00 \\
Stationary control & 2 & 1.25 & 0.435 & 0.25 \\
Stationary control & 3 & 1.07 & 0.256 & 0.07 \\
\bottomrule
\end{tabular}
\end{table}

\subsection{Sensitivity to field size, noise and generating radius}
\label{sec:supp-bic-sensitivity}

A focused sensitivity study examined selection of the first SFGDPC neighbourhood radius under changes in spatial field size, coefficient-wise noise level, and generating radius. Only the first component was examined
because \(k_{\mathrm{true}}\) defines the dominant generating neighbourhood before sequential residualization. This study concerns the behaviour of
the SFGDPC BIC selector and is separate from the reconstruction comparison between SFGDPC and SFPCA.

In this subsection, \(k_{\mathrm{gen}}\) denotes the nominal radius used to construct the finite loading neighbourhood in the data-generating mechanism; this quantity is recorded as \texttt{k\_true} in the simulation code. For \(k_{\mathrm{gen}}=0\), only the focal latent-field value enters the loading, although the latent Gaussian process itself remains spatially correlated. For the smaller grid, \(k_{\mathrm{gen}}=2\) applies exactly to interior
locations, whereas boundary locations are affected by the generator's coordinate-clamping convention. Correct, under-, and over-selection are defined relative to this nominal generating radius. For each generating radius, the loading matrix was constructed using the rule in Section~\ref{sec:supp-loading-matrix} with the corresponding ordered offset set. Changing \(k_{\mathrm{gen}}\) therefore also changes
the offset-indexed profiles.

The nonstationary local-neighbourhood generator was retained throughout. It used \(d=15\), with the Gaussian-process correlation range varying
from 1 to 4 grid cells across latitude. The common candidate set was \(k\in\{0,1,2,3,4\}\), so that \(s_{\max}=4\). This allowed both under- and over-selection and kept the \(k_{\mathrm{gen}}=3\) condition below the search boundary. SFGDPC was fitted using BIC, \texttt{normalize = 1}, a convergence tolerance of \(10^{-4}\), and a maximum of 500 alternating least-squares iterations. After generation, the observed coefficient matrix was column-centred
across spatial locations before fitting SFGDPC. NMSE was evaluated relative to this same centred observed field.

Each scenario used 100 Monte Carlo replications. Replication \(r\) was assigned seed \(271+r\), giving seeds 272 through 371, and the same replication seeds were used across scenarios. Table S3 summarizes the empirical BIC selection frequencies across the six sensitivity scenarios.

\begin{table}[ht]
\centering
\small
\setlength{\tabcolsep}{3.5pt}
\caption{Empirical BIC selection frequencies for the first SFGDPC component
over 100 Monte Carlo replications per scenario. Exact, under, and over denote
\(\widehat{k}=k_{\mathrm{gen}}\),
\(\widehat{k}<k_{\mathrm{gen}}\), and
\(\widehat{k}>k_{\mathrm{gen}}\), respectively.}
\label{tab:supp-bic-sensitivity}
\begin{tabular}{lccccccc}
\toprule
Scenario
& Grid
& \shortstack{Noise\\ratio}
& \shortstack{Nominal generating\\radius \(k_{\mathrm{gen}}\)}
& \shortstack{Selected radius\\(frequency)}
& Exact
& Under
& Over \\
\midrule
Baseline
& \(40\times40\)
& 0.15
& 2
& \(\widehat{k}=2\) (1.00)
& 1.00
& 0.00
& 0.00 \\

Small field
& \(20\times20\)
& 0.15
& 2
& \(\widehat{k}=1\) (1.00)
& 0.00
& 1.00
& 0.00 \\

Large field
& \(60\times60\)
& 0.15
& 2
& \(\widehat{k}=2\) (1.00)
& 1.00
& 0.00
& 0.00 \\

Higher noise
& \(40\times40\)
& 0.30
& 2
& \(\widehat{k}=2\) (1.00)
& 1.00
& 0.00
& 0.00 \\

Lag-zero loading
& \(40\times40\)
& 0.15
& 0
& \(\widehat{k}=0\) (1.00)
& 1.00
& 0.00
& 0.00 \\

\shortstack[l]{Larger generating\\radius}
& \(40\times40\)
& 0.15
& 3
& \(\widehat{k}=2\) (1.00)
& 0.00
& 1.00
& 0.00 \\
\bottomrule
\end{tabular}
\end{table}

All 600 replications completed without error, and all 3000 candidate fits reported convergence. No selected radius reached the search limit \(s_{\max}=4\). Manual and stored NMSE calculations agreed to within
\(2.8\times10^{-17}\).

The nominal radius was selected in every replication under the baseline, large-field, higher-noise, and lag-zero-loading conditions. In contrast, BIC underselected in every replication on the \(20\times20\) field and when \(k_{\mathrm{gen}}=3\). These results demonstrate exact recovery under several of the examined conditions, but they do not establish uniform or
asymptotically consistent recovery.

The field-size comparison should not be interpreted as a pure sample-size experiment. Under \(k_{\mathrm{gen}}=2\), the proportions of locations whose radius-two generating window extends beyond the observed grid are \(36.0\%\), \(19.0\%\), and \(12.9\%\) for the \(20\times20\),
\(40\times40\), and \(60\times60\) grids, respectively. Since the generator uses nearest-boundary clamping whereas the fitted SFGDPC method uses compacting self-padding, the smaller-field result combines reduced field size with a larger relative contribution from the
boundary region.

For \(k_{\mathrm{gen}}=3\), the outermost Chebyshev ring accounts for approximately \(5.3\%\) of the squared Frobenius norm of the deterministic loading matrix. Selection of \(\widehat{k}=2\) is therefore consistent with BIC identifying a more compact dominant reconstruction scale, although it remains under-selection under the strict generating-support definition.

\section{Sensitivity to the SFPCA Filter Radius}
\label{sec:supp-sfpca-radius}
The SFPCA comparison was examined over several fixed filter radii. For this sensitivity study, the supplied radii were restricted to values
satisfying \(1+2L\leq n-2L\), or equivalently
\(L\leq(n-1)/4\). This prespecified boundary-safe rule ensures that a nonempty region remains sufficiently separated from the grid boundary under the interior-diagnostic convention used by the SFPCA implementation. It is a design restriction for the present sensitivity analysis rather than a general admissibility condition for whole-grid SFPCA reconstruction. Under this rule, the largest examined integer radii were \(L=9\) on the \(40\times40\) grid and \(L=14\) on the \(60\times60\) grid.

The sensitivity analysis used 100 Monte Carlo replications under the nonstationary local-neighbourhood transfer model. Within each
replication, all candidate SFPCA radii were evaluated using the same simulated coefficient field. SFGDPC was fitted once per field because its estimates do not depend on the SFPCA filter radius. The SFPCA spectral smoothing parameter \(q\) was determined using the same data-adaptive rule implemented in \texttt{spca.automatic} as in
Section~\ref{sec:supp-sfarma}.

\begin{table}[ht]
\centering
\small
\setlength{\tabcolsep}{5pt}
\caption{Cumulative reconstruction NMSE in the SFPCA filter-radius
sensitivity analysis under the centred nonstationary local-neighbourhood
transfer design.}
\label{tab:supp-lmax-sweep}
\begin{tabular}{llcccc}
\toprule
Grid
& Method
& Radius \(L\)
& \(p=1\)
& \(p=2\)
& \(p=3\) \\
\midrule
\(40\times40\)
& SFPCA
& 3
& 0.1423 (0.0194)
& 0.1278 (0.0223)
& 0.1264 (0.0237) \\

\(40\times40\)
& SFPCA
& 5
& 0.1080 (0.0117)
& 0.0825 (0.0117)
& 0.0765 (0.0113) \\

\(40\times40\)
& SFPCA
& 7
& 0.0906 (0.0097)
& 0.0651 (0.0084)
& 0.0588 (0.0081) \\

\(40\times40\)
& SFPCA
& 9
& 0.0819 (0.0087)
& 0.0561 (0.0071)
& 0.0497 (0.0068) \\

\(40\times40\)
& SFGDPC
& --
& \textbf{0.0506 (0.0051)}
& \textbf{0.0334 (0.0035)}
& \textbf{0.0247 (0.0027)} \\

\midrule
\(60\times60\)
& SFPCA
& 9
& 0.0744 (0.0060)
& 0.0545 (0.0053)
& 0.0495 (0.0052) \\

\(60\times60\)
& SFPCA
& 11
& 0.0686 (0.0055)
& 0.0485 (0.0046)
& 0.0434 (0.0046) \\

\(60\times60\)
& SFPCA
& 13
& 0.0648 (0.0053)
& 0.0446 (0.0042)
& 0.0395 (0.0041) \\

\(60\times60\)
& SFPCA
& 14
& 0.0634 (0.0053)
& 0.0432 (0.0042)
& 0.0380 (0.0041) \\

\(60\times60\)
& SFGDPC
& --
& \textbf{0.0435 (0.0028)}
& \textbf{0.0260 (0.0018)}
& \textbf{0.0192 (0.0014)} \\
\bottomrule
\end{tabular}

\vspace{0.4em}
\begin{minipage}{0.94\textwidth}
\footnotesize
Notes: Entries are Monte Carlo means with standard deviations in parentheses
over 100 paired replications. Lower NMSE indicates better reconstruction;
boldface identifies the lowest mean within each grid and component count.
SFPCA used \(\beta=1\) and a \(200\times200\) frequency grid, and its
effective filter radius equalled the listed value in every replication.
SFGDPC was fitted once per simulated field using BIC selection over
\(k=0,\ldots,3\).
\end{minipage}
\end{table}
\FloatBarrier

The Monte Carlo mean SFPCA NMSE decreased monotonically as \(L\) increased on both grids. At the largest radius examined under the prespecified boundary-safe rule on the \(60\times60\) grid, the reductions in mean NMSE obtained by SFGDPC relative to SFPCA were approximately
$31.4\%$, $39.9\%$, and $49.4\%$ after one, two, and three components, respectively. 

\section{Representation Under the Stationary Control}
\label{sec:supp-spectral}
The stationary control also permits a population-level comparison of the representational structures of the two spatial methods. Although the loading matrix $B$ has full row rank, the noise-free coefficient field
is generated from a single scalar stationary latent process through a shift-invariant multivariate spatial kernel.

Ignoring finite-domain boundary effects, its spectral density matrix at spatial frequency $\bm{\theta}$ can be written as
$$
\bm{F}_{Z}(\bm{\theta}) =
g_f(\bm{\theta})
\bm{b}(\bm{\theta})
\bm{b}(\bm{\theta})^{\ast},
$$
where $g_f$ is the spectral density of the scalar latent field and
$$
\bm{b}(\bm{\theta}) =
\sum_{\bm{\ell}}
B_{\bm{\ell}}
\exp\left(
-\mathrm{i}\bm{\ell}^{\top}\bm{\theta}
\right).
$$
Consequently,
\(\operatorname{rank}\{\bm F_Z(\bm\theta)\}\leq1\) at each spatial frequency in the noise-free population model.

Thus, with unrestricted filters, one spatial dynamic principal component is sufficient to represent the stationary signal at the population level. SFPCA is therefore not intrinsically unable to represent the data-generating structure. Finite-sample estimation, filter truncation, and the different boundary conventions may contribute to the observed reconstruction difference; the present experiments do not isolate their contributions. The improvement of SFPCA with increasing \(L\) in the nonstationary sensitivity study in Table~\ref{tab:supp-lmax-sweep} is compatible with a contribution from filter truncation, but does not establish its contribution under the stationary control.